\newtheorem{corollary}{\bf Corollary}
\newtheorem{theorem}{\bf Theorem}
\newtheorem{proposition}{\bf Proposition}
\newtheorem{definition}{\bf Definition}
\newtheorem{remark}{\bf Remark}
\title{\LARGE \bf SODEXO: A System Framework for Deployment and Exploitation of Deceptive Honeybots in Social Networks}
\author{
Quanyan Zhu, Andrew Clark, Radha Poovendran and Tamer Ba\c{s}ar%
\thanks{The research was partially supported by the AFOSR MURI Grant FA9550-10-1-0573,  and also by an NSA Grant through the Information Trust Institute at the University of Illinois.}%
\thanks{Q. Zhu and T. Ba\c{s}ar are with the Coordinated Science Laboratory and Department of Electrical and Computer Engineering, University of Illinois at Urbana-Champaign, Urbana, IL 61801 USA.
Email: {\tt\small \{zhu31, basar1\}@illinois.edu}}%
\thanks{Andrew Clark and Radha Poovendran are with the Department of Electrical Engineering, University of Washington,
Seattle, WA 98195 USA.
Email: {\tt\small \{awclark, rp3\}@u.washington.edu}}%
}
\begin{document}
\maketitle
\thispagestyle{empty}
\pagestyle{empty}

\begin{abstract}
As social networking sites such as Facebook and Twitter are becoming increasingly popular, a growing number of malicious attacks, such as phishing and malware, are exploiting them. 
Among these attacks, social botnets have  sophisticated infrastructure that leverages compromised users accounts, known as {\it bots}, to  automate the creation of new social networking
accounts for spamming and malware propagation. Traditional defense mechanisms are often passive and reactive to non-zero-day attacks. In this paper, we adopt a proactive approach for enhancing security in social networks by infiltrating botnets with honeybots. We propose an integrated system named SODEXO which can be interfaced with social networking sites for creating deceptive honeybots and leveraging them for gaining information from botnets. We establish a Stackelberg game framework to capture strategic interactions between honeybots and botnets, and use quantitative methods to understand the tradeoffs of honeybots for their deployment and exploitation in social  networks. We design a protection and alert system that integrates both microscopic and macroscopic models of honeybots and  optimally determines the security strategies for honeybots. We corroborate the proposed mechanism with extensive simulations and comparisons with passive defenses. 

\bigskip
\noindent {\bf Keywords}: social networks; cyber security; game theory; botnet; malware propagation; Stackelberg games
\end{abstract}

\section{Introduction}
\label{sec:intro}
	Online social networks such as Facebook and Twitter are employed daily by hundreds of millions of users to communicate with acquaintances, follow news events, and exchange information.
	The growing popularity of OSNs has led to a corresponding increase in spam, phishing, and malware on social networking sites.
	The fact that a user is likely to click on a web link that appears in a friend's Facebook message or Twitter feed can be leveraged by attackers who compromise or impersonate that individual.

	An important class of malware attacks on social networks is social botnets~\cite{baltazar2009real,Keizer1998}.
	In a social botnet, an infected user's device and social networking account are both compromised by installed malware.
	The compromised account is then used to send spam messages to the user's contacts, containing links to websites with the malware executable.
	As a result, compromising a single well-connected user could lead to hundreds or thousands of additional users being targeted for spam, many of whom will also become members of the botnet and further propagate the malware.
	The most prominent example of a social botnet to date is Koobface, which at its peak had infected 600,000 hosts~\cite{baltazar2009real}.

	Current methods for mitigating malware, including social botnets, in social networks are primarily based on URL blacklisting.
	In this defense mechanism, links that are suspected to contain spam or malware are added to a centralized blacklist controlled by the owner of the social network.
	After a link has been blacklisted, the social networking site will no longer communicate with the IP address indicated by the link, even if a user clicks the link~\cite{twitter_faq}.

	While blacklisting can slow the propagation of malware, there remain several drawbacks to this approach.
	First, automated methods for blacklisting links often fail to detect spam and malware; one survey suggests that 73\% of malicious links go undetected and are not added to the blacklist~\cite{Thomas2010}.
	Second, automated blacklisting creates the risk of valid accounts and messages being classified as spam, degrading the user experience.
	Third, even for links that are correctly identified as pointing to malware, there is typically a large delay between when links are detected and blacklisted.
	 One study estimates this delay as 25 days on average, while at the same time most clicks on malware links occur within the first 48 hours of posting~\cite{grier2010spam}.

	A promising approach to defending against social botnets is through deception mechanisms.
	In a deceptive defense, the defender generates fake social network profiles that appear similar to real profiles and waits to receive a link to malware.
	The defender then follows the link to the malware site, downloads the malware executable, and runs it in a quarantined, sandbox environment.
	By posing as an infected node and interacting with the owner of the botnet, the defender  gathers links that are reported to the blacklist either before or shortly after they are posted, reducing the detection time and increasing the success rate.  
Currently, however, there is no
systematic approach to modeling social botnets and the effectiveness of deception, as well as designing an effective strategy for infiltrating the botnet and gathering information.
	
In this paper, we introduce an analytical framework for SOcial network DEception and eXploitation through hOneybots (SODEXO).  Our framework has two components, \emph{deployment} and \emph{exploitation}.  The deployment component models how decoy accounts are introduced into the online social network and gain access to the botnet.  The exploitation component characterizes the behavior of the decoys and the botnet owner after infiltration has occurred, enabling us to model the effect of the decoy on the botnet operation.

For the deployment component, we first develop a dynamical model describing the population of a social botnet over time.  We derive the steady-state equilibria of our model and prove the stability of the equilibria.  We then formulate the problem of selecting the optimal number of honeybots in order to maximize the information gathered from the botnet as a convex optimization problem.  Our results are extended to include networks with heterogeneous node degree.

We model the exploitation of the botnet by the honeybots as a Stackelberg game between the botmaster and the honeybots.  In the game, the botmaster allocates tasks, such as spam message delivery, among multiple bots based on their trustworthiness and capabilities.  The honeybots face a trade-off between obtaining more information by following the commands of the botmaster, and the impact of those commands on other network users.  We derive closed forms for the optimal strategies of both the botmaster and honeybots using \emph{Stackelberg equilibrium} as a solution concept.  We then incorporate the utility of the honeybot owner under the Stackelberg equilibrium in order to select an optimal deployment strategy.

The paper is organized as follows.  The related work is reviewed in Section \ref{sec:related}.  In Section \ref{sec:architecture}, we describe the architecture of our proposed framework for deceptive defense.  In Section \ref{sec:exploitation}, we model the exploitation phase of the botnet, in which the honeybot gathers the maximum possible information while avoiding detection by the botmaster.  In Section \ref{sec:deployment}, we model the deployment and population dynamics of the infected nodes and honeybots.  Section \ref{sec:PAS} describes the Protection and Alert System (PAS), which provides a unifying framework for controlling deployment and exploitation.  Section \ref{sec:simulation} presents our simulation results.  Section \ref{sec:conclusion} concludes the paper.  

%

\section{Related Work}
\label{sec:related}
%

Social botnets are becoming a serious threat for network users and managers, as they possess sophisticated infrastructure that leverages compromised users accounts, known as {\it bots}, to  automate the creation of new social networking
accounts for spamming and malware propagation \cite{Keizer1998}.
  In \cite{Lee2010}, a honeypot-based approach is used to
uncover social spammers in online social systems. It has been shown that
social honeypots can be used to identify social spammers with low false
positive rates, and that the harvested spam data contain signals
that are strongly correlated with observable profile features, such as friend information and posting patterns.  The goal of \cite{Lee2010}, however, is not to infiltrate the botnet, but to use honeypots to differentiate between real and spam online profiles.

In \cite{Thomas2010}, a zombie emulator is used to infiltrate the
Koobface botnet to discover the identities of fraudulent
and compromised social network accounts. The authors  arrived at the conclusion that ``to stem the threat of Koobface and the rise of social malware, social networks must advance their defenses beyond blacklists
and actively search for Koobface content, potentially using infiltration as a means of early detection." This insight coincides with our proactive approach for defending social networks using deceptive social honeybots.

Deception provides an effective approach for building proactively secure systems\cite{Caddell2004,DoD2004}. Considerable amount of work can be found using deception for enhancing cyber security.  In recent literature on intrusion detection systems, honeypots have been used to monitor suspicious intrusions \cite{Yuill2006,john2009studying}, and provide signatures of zero-day attacks \cite{Portokalidis2006}. 
In \cite{McQueen2009}, to enhance the security of  control systems in critical infrastructure, deception has been proposed to make the system more difficult for attackers to plan and execute successful attacks.  At present, however, there has been no analysis on the impact of deception on malware propagation in social networks.

In order to establish a formal method to evaluate the performance of deceptive social honeybots against botnets, we employ a game- and system- theoretic approach to model the strategic behaviors of botnets and the deployment and exploitations of honeybots. Such approaches have become pivotal for designing security mechanisms in a quantitative way \cite{ACMSurvey}.  In~\cite{khouzani2010maximum}, an optimal control approach to modeling the maximum impact of a malware attack on a communication network is presented. 
In \cite{ZhuCDC2011}, the authors have proposed an architecture for a collaborative intrusion detection network and have adopted a game-theoretic approach for designing a reciprocal incentive compatible resource allocation component of the system against free-rider and insider attacks.

\section{System Architecture}
\label{sec:architecture}

In this section, we introduce our honeybot-based defense system named SODEXO for protecting social networks against malicious attacks.  Fig. \ref{botdeception} illustrates the architecture of SODEXO.
\begin{figure}[t]
\centering
  \includegraphics[width=2.5in]{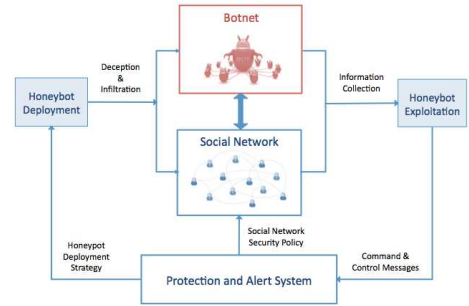}\\
  \caption{System architecture of honeybot deceptive mechanism in social networks}\label{botdeception}
\end{figure}
Our framework consists of two components,
 namely, honeybot deployment (HD) and honeybot exploitation (HE). HD deals with the distribution of honeybots within social networks and the deception mechanisms to infiltrate the botnet to learn and monitor the activities in botnets. HE aims to use the successfully infiltrated honeybots to collect as much  information as possible from the botnet.  The behaviors of the two blocks are coordinated by a Protection and Alert System (PAS), which uses the gathered information to generate real-time signatures and alerts for the social network (Fig. \ref{protection}).
 \begin{figure}[t]
\centering
  \includegraphics[width=2.5in]{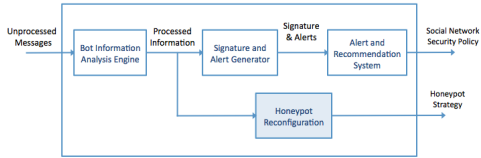}\\
  \caption{Architecture of the protection and alert system}\label{protection}
\end{figure}

The introduction of honeybots into a social network allows a proactive defense and monitoring of the social network against botnets. The SODEXO architecture bears its resemblance to feedback control systems. The HE component behaves as a security sensor of the social network; PAS can be seen as a controller which takes the ``measurements" from HE and yields a honeypot deployment strategy; and HD acts as an actuator that updates the honeypot policy designed by PAS.
In the following subsections, we discuss in detail each  component of SODEXO.
\subsection{Honeybot Deployment (HD)}
A honeypot is deployed by first creating an account on a social networking site.  The account profile is designed to imitate a real user, as in \cite{Lee2010}.  Once deployed, the honeypot sends a set of friend requests to a set of randomly chosen other users.  The honeypot continues sending friend requests to random users until the desired number of neighbors, denoted $d$, has been reached.  The honeypot monitors the message traffic of its neighbors, which may include personal messages, wall posts, or Twitter feeds, and follows any posted link.  If the link points to malware and has not been blacklisted, then the honeypot becomes a member of the social botnet and proceeds to the exploitation stage.

\subsection{Honeybot Exploitation (HE)}
The HE component of SODEXO takes advantage of the successfully infiltrated honeybots to gain as much information as possible from the botnet. The information is obtained in the form of command and control messages. The honeybots need to gain an appropriate level of trust from the bots and respond to the C\&C messages while minimizing harm to the legitimate social network users and avoiding legal liability. Honeybots  work collaboratively to achieve this goal. In the case where honeybots are commanded to send spam or malware to network users, they can send them to each other to remain active in the botnet. Depending on the sophistication of the botnet, honeybots can sometimes be detected using mechanisms described in \cite{Zou2010, Zou2006}. In this case, a higher growth rate of honeybot population will be needed to replace the detected honeybots. Hence, the performance of HE heavily depends on the effectiveness of HD, and in turn, HD should change its policy based on the sophistication of botnets and the amount of information learned in HE.

\subsection{Protection and Alert System (PAS)}

The major role of PAS is to provide security policies for HD  based on the information learned from HE.  Fig. \ref{protection} illustrates two major functions of PAS. The first step of PAS is to process the messages and logs gained from honeybots. Using data mining and machine learning techniques, it is possible that the structure of botnets can be inferred from network traffic information \cite{Gu_NDSS08_botsniffer} and  botnet C\&C channels in a local area network can be identified \cite{Gu_Scurity08_BotMiner}. These information can be used by the network administrator to detect the location of botmasters and remove them from the network.

The second important task of HD is to generate signatures for detecting malware and spam, which are then used to update the libraries of intrusion detection systems, blacklists of spam filters, and user alerts or recommendations.  The process of reconfiguration of IDSs and spam filters can be done either offline or real-time as in \cite{ZhuCDC2009} and \cite{GameSecZhu2011}. 

\begin{figure}[t]
\centering
  \includegraphics[width=2.5in]{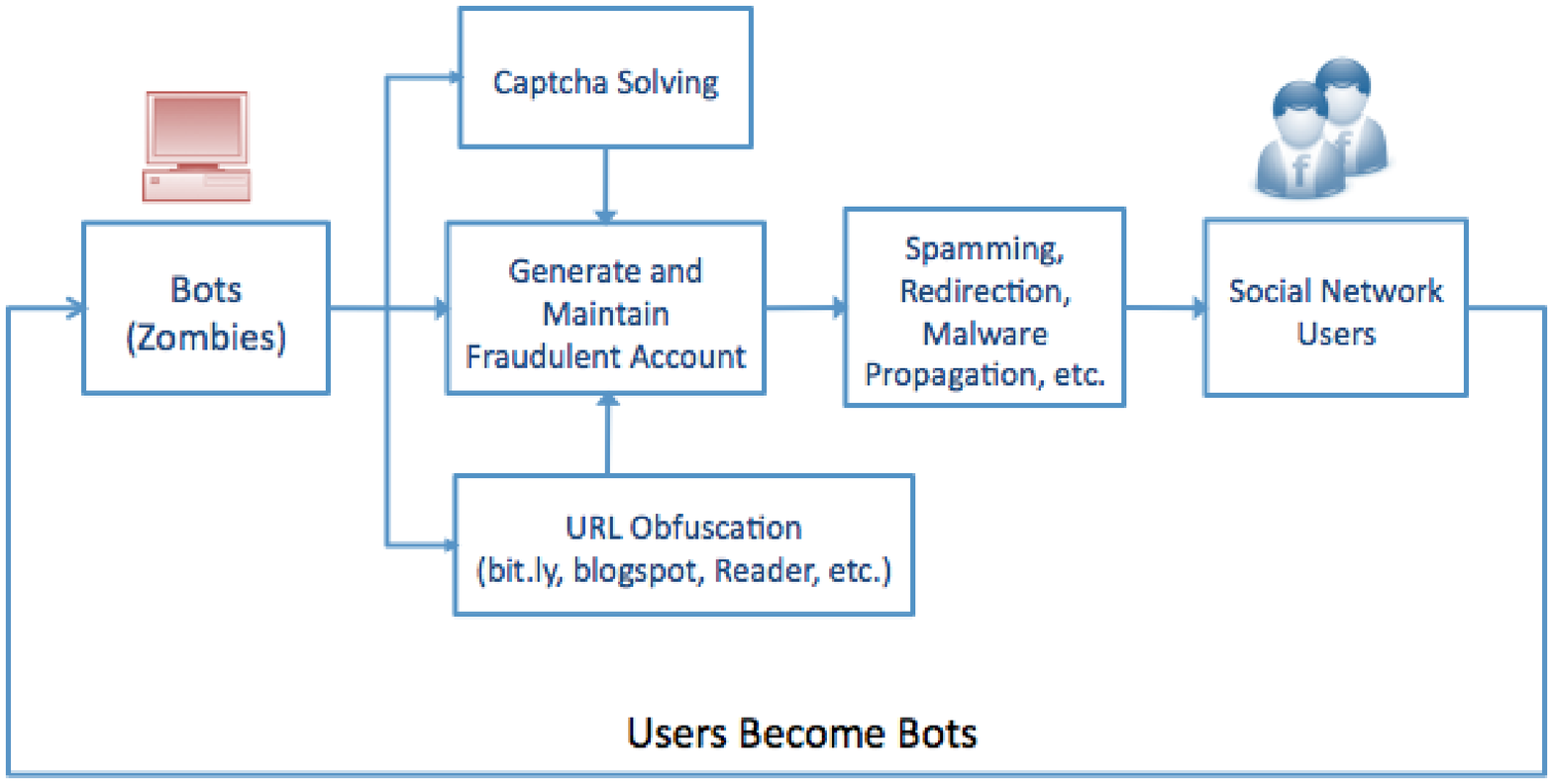}\\
  \caption{Illustration of the interactions between social networks and botnets}\label{Botnet-Socialnet}
\end{figure}

\subsection{Botnet Propagation Model}
 Fig. \ref{Botnet-Socialnet}  illustrates a mechanism used by botnets to infect social network users, which has been found in the Koobface botnet \cite{Keizer1998,Thomas2010}. The botnet maintains a fixed domain that bots or zombies regularly contact to report uptime statistics and request links for spamming activity. The bots  aim to obtain fresh user accounts and send malicious messages. The bot messages contain a malicious URL obfuscated by shortening services such as bit.ly or wrapped by an innocuous website including Google Reader and Blogger. Clicking on the URL of these messages eventually redirects to a spoofed Youtube or Facebook page that attempts to trick the victim into installing malware masquerading as a Flash update. Unsuspecting users  become infected by clicking on these messages.  Infected users are recruited to spam their own social network friends, leading to a wide propagation of malware within social network users.

 Once a user has been compromised, it makes frequent attempts to connect with one or more command control (C\&C) bots to retrieve commands from the botnet for further actions.  These commands are usually issued from another compromised computer for the purpose of concealing the botmaster's real identity~\cite{Zou2010}, leading to a hierarchical botnet architecture.  Fig. \ref{botnet2} illustrates the structure of a typical botnet, where a single botmaster sends messages to two C\&C bots and then they send to bots.

\section{System Model for Honeybot Exploitation}\label{sec:exploitation}




In this section, we introduce a system model for hierarchical botnets and employ a Stackelberg game framework to model the interactions between the botnet and infiltrating honeybots.

\subsection{Theoretical Framework}

Consider a botmaster $B$ that sends requests to a set of  C\&C bots $\mathcal{M}=\{1, 2, \cdots, m\}$ with $m=|\mathcal{M}|$. Each C\&C bot $i\in\mathcal{M}$ sends commands to a set of   compromised bot nodes $\mathcal{N}_i$ with $n_i=|\mathcal{N}_i|$. We assume that the botnet is a three-level tree architecture and, without loss of generality, we can assume $\cap_{i\in\mathcal{M}}\mathcal{N}_i =\emptyset$ since a single bot controlled by multiple C\&C bots can be modeled using multiple duplicate bots. Let $H$ be a honeybot that communicates with node $i\in\mathcal{M}$, i.e., $H \in\mathcal{N}_i$. We assume that all honeybots work together as a team, and hence one honeybot node $H$ under one C\& C subtree can conceptually represent a group of collaborative honeypots who have succeeded  in infiltrating the same botnet.

We let $p_{ij}\in\mathbb{R}_{+}$ be the number of messages or commands (in bytes) per second sent from C\&C bot $i$ to bot node $j\in\mathcal{N}_i$. Likewise, $p_{ji}$ denotes the number of response messages per second to C\&C node $i\in\mathcal{M}$ from node $j\in\mathcal{N}_i$.

 Each C\&C node $i$ maintains a trust value $T_{ij}\in [0,1]$ associated with a bot or honeybot node $j\in\mathcal{N}_i$. The trust values indicate the quality of response and performance of bot nodes. The trust values also inherently model the detection mechanisms in botnets, which have been discussed in \cite{Zou2010, Zou2010a}. For botnets with such mechanisms, low trust values indicate the inefficiency of a bot or a high likelihood of being a honeybot. For those without such mechanisms, we can take $T_{ij} = 1$, for all $j\in\mathcal{N}_i$, i.e., equivalently seeing all bots are all equally trusted.

One C\&C bot needs to send commands to a large population of bot nodes. Hence, the goal of  C\&C bot $i\in\mathcal{M}$ is to allocate its communication resources $\mathbf{p}_i:=[p_{ij}, j\in\mathcal{N}_i]$ to maximize the utility of its subtree network $U_i: \mathbb{R}^{n_i}_+ \rightarrow \mathbb{R} $, which is the sum of utilities obtained from each bot $j$, i.e.,
\begin{equation}\label{Ui}
U_i (\mathbf{p}_i) = \sum_{j\in\mathcal{N}_i} U_{ij},
\end{equation}
where $U_{ij}:\mathbb{R}_+ \rightarrow \mathbb{R}$ is the individual utility of C\&C bot $i$ from bot $j\in\mathcal{N}_i$, which is chosen to be
\begin{equation}\label{Uij}
U_{ij}(p_{ij}):=   T_{ij} p_{ji} \ln (\alpha_i p_{ij}+1).
\end{equation}
The choice of logarithmic function in (\ref{Uij}) indicates that the marginal utility of the C\&C bot diminishes as the number of messages increases. It captures the fact that the bots have limited resources to respond to commands, and a larger volume of commands can overwhelm the bots, which leads to diminishing marginal utility of node $i$. $\alpha_i\in\mathbb{R}_{++}$ is a positive system parameter that determines marginal utility.  

The utility of C\&C bot $i$ is also proportional to the number of messages or responses per second from bot $j$, indicated by $p_{ji} \in\mathbb{R}_+$. The number of response messages from bot $j$ indicates the level of activity of a bot. We can see that when $p_{ji} = 0$ or $T_{ij}=0$ in (\ref{Uij}), then bot $i$ is believed to be either inactive or fake, and it is equivalently removed from the subtree of C\&C node $j$ in terms of the total utility (\ref{Ui}). Note that $T_{ij}$ in (\ref{Uij}) evaluates the quality of the responses while $p_{ji}$ evaluates the quantity. The product of $T_{ij}$ and $p_{ji}$ captures the fact that the botnet values highly active and trusted bots.

We consider the following C\&C bot optimization problem (BOP) of every node $i\in \mathcal{M}$:
\begin{eqnarray}
\textrm{(BOP)} \nonumber \max_{\mathbf{p}_i\in\mathbb{R}^{n_i}_+}& U_i:= \sum_{j\in\mathcal{N}_i} T_{ij} p_{ji} \ln (\alpha_i p_{ij}+1)\\
\label{constraint} \textrm{\ s.t.\ } & \sum_{j\in\mathcal{N}_i} c_{ij} p_{ij} \leq C_i.
\end{eqnarray}

The constraint (\ref{constraint}) in (BOP) is a capacity constraint on the communications using C\&C channel, where $C_i$ is the total capacity of the channel.  The cost $c_{ij} \in\mathbf{R}_{++}$ is the cost on sending commands to bots. The cost is also dependent on the size of messages from C\&C bot $i$ to its controlled bots. It has been found in \cite{Thomas2010} that Twitter has larger volume of spam messages than Facebook. This is due to the fact that Twitter messages are often shorter than facebook messages, and hence the cost for commanding bots spamming with Twitter messages is relatively less than the one for Facebook.

Let $\mathcal{F}_i:=\{\mathbf{p}_i\in\mathbb{R}^{n_i}_+ :  \sum_{j\in\mathcal{N}_i} c_{ij} p_{ij} \leq C_i\}$ be the feasible set of (BOP). We let $\mathcal{L}_i: \mathbb{R}^{n_i}_+ \times \mathbb{R} \rightarrow \mathbb{R}$ be the associated Lagrangian defined as follows:
\begin{equation}
\mathcal{L}_i(\mathbf{p}_i, \lambda_i) =  \sum_{j\in\mathcal{N}_i} T_{ij} p_{ji} \ln (\alpha_i p_{ij}+1) + \lambda_i  \left(\sum_{j\in\mathcal{N}_i} c_{ij} p_{ij} - C_i\right)
\end{equation}
Since the feasible set is nonempty and convex, and the objective function is convex in $\mathbf{p}_i$, it is clear that (BOP) is a convex program, and hence we can use the first-order optimality condition to characterize the optimal solution to (BOP):
\begin{equation}
\frac{\partial \mathcal{L}_i}{\partial p_{ij}} =\sum_{j\in\mathcal{N}_i} \frac{\alpha_iT_{ij}p_{ji}}{\alpha_ip_{ij}+1} +\lambda_i c_{ij} = 0,
\end{equation}
which leads to
\begin{equation}\label{pij}
p_{ij} = -\frac{T_{ij}}{\lambda_i c_{ij}} - \frac{1}{\alpha_i}.
\end{equation}
Due to the monotonicity of logarithmic functions in (\ref{Uij}), the optimal solution is found on the Pareto boundary of feasible set. Hence by letting $\sum_{j\in\mathcal{N}_i} c_{ij} p_{ij} = C_i$, we obtain Lagrangian multiplier $\lambda_i$ from (\ref{pij}) as follows.
\begin{equation}\label{lambda}
\lambda_i = -\frac{\sum_{j\in\mathcal{N}_i}T_{ij}p_{ji}}{C_i+\frac{1}{\alpha_i}\sum_{j\in\mathcal{N}_i}c_{ij}}.
\end{equation}
We make following assumptions before stating Theorem \ref{thmBOP}.
\begin{enumerate}
\item [ \textbf{(A1)} ] The product $T_{ij}p_{ji} \neq0$ for all $j\in\mathcal{N}_i, i\in\mathcal{M}$.
\end{enumerate}

Assumption (A1) states that all bots controlled by C\&C bot $i$ are both active and trusted. This assumption is valid because for a controlled bot $j$ that is either inactive ($p_{ij}=0$) or untrusted ($T_{ij}=0$) can be viewed as the one  excluded from the set $\mathcal{N}_i$. Hence Assumption (A0) is equivalent to the statement that $\mathcal{N}_i$ contains all active and trusted bots.

\begin{theorem}\label{thmBOP}
Under Assumption (A1), (BOP) admits a unique solution when $\alpha_i$  is sufficiently large.
 \begin{equation}\label{pijclosed}
 p_{ij} = \left(\frac{T_{ij}p_{ji}}{\sum_{j\in\mathcal{N}_i}T_{ij}p_{ji}}\right)\left( \frac{C_i + \frac{1}{\alpha_i}\sum_{j\in\mathcal{N}_i} c_{ij}}{c_{ij}} \right)-\frac{1}{\alpha_i}.
 \end{equation}
 \end{theorem}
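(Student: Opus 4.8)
The plan is to treat (BOP) as a standard concave maximization over a compact convex set and then locate precisely where the hypothesis ``$\alpha_i$ sufficiently large'' is actually needed. First I would record the structural facts. The feasible set $\mathcal{F}_i$ is closed, and since every $c_{ij}>0$ and $C_i$ is finite it is bounded (each coordinate obeys $p_{ij}\le C_i/c_{ij}$), hence compact; it is clearly convex. Each summand $T_{ij}p_{ji}\ln(\alpha_i p_{ij}+1)$ of the objective depends on a single coordinate, and under (A1) its second derivative $-\alpha_i^2 T_{ij}p_{ji}/(\alpha_i p_{ij}+1)^2$ is strictly negative, so $U_i$ is a sum of functions each strictly concave in a distinct coordinate and is therefore strictly concave. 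A continuous strictly concave function on a compact convex set attains its maximum at a unique point, which already secures existence and uniqueness.

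Next I would identify that point. Because $\partial U_{ij}/\partial p_{ij}=\alpha_i T_{ij}p_{ji}/(\alpha_i p_{ij}+1)>0$ under (A1), the objective is strictly increasing in each $p_{ij}$, so the budget constraint must be active at the optimum; otherwise some $p_{ij}$ could be raised to strictly improve $U_i$. This justifies working on the hyperplane $\sum_{j\in\mathcal{N}_i} c_{ij}p_{ij}=C_i$. The stationarity condition of the Lagrangian gives (\ref{pij}) for each $j$; multiplying (\ref{pij}) by $c_{ij}$, summing over $j\in\mathcal{N}_i$, and using the active constraint isolates the multiplier as in (\ref{lambda}) (note it is nonpositive, consistent with the constraint binding from below); substituting back produces the candidate (\ref{pijclosed}).

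The main obstacle, and the only place $\alpha_i$ enters, is confirming that this candidate is genuinely feasible, i.e. that the nonnegativity constraints $p_{ij}\ge 0$ are inactive, so that the interior stationary point really is the constrained maximizer and not an artifact of having dropped those constraints. Here I would take the limit $\alpha_i\to\infty$ in (\ref{pijclosed}): the terms $\frac{1}{\alpha_i}\sum_{k\in\mathcal{N}_i} c_{ik}$ and $\frac{1}{\alpha_i}$ both vanish, leaving $p_{ij}\to \big(T_{ij}p_{ji}/\sum_{k\in\mathcal{N}_i} T_{ik}p_{ki}\big)(C_i/c_{ij})$, which is strictly positive under (A1) together with the positivity of $c_{ij}$ and $C_i$. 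By continuity there is a threshold beyond which every coordinate of (\ref{pijclosed}) is strictly positive; for such $\alpha_i$ the candidate lies in the relative interior of $\mathcal{F}_i$, satisfies all KKT conditions with a valid multiplier, and, KKT being sufficient for a concave program under Slater's condition, coincides with the unique global maximizer. I would close by observing that for small $\alpha_i$ the formula can return negative values, signalling a boundary (sparse) solution in which some constraints $p_{ij}\ge0$ bind, which is exactly why the quantitative hypothesis on $\alpha_i$ cannot be removed.
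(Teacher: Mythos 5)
Your proposal is correct and follows essentially the same route as the paper's own proof: strict concavity of the separable objective gives uniqueness, monotonicity forces the budget constraint to bind, stationarity yields (\ref{pij}) and (\ref{lambda}), and the largeness of $\alpha_i$ is invoked only to guarantee nonnegativity of (\ref{pijclosed}). Your version is more carefully argued than the paper's (which conflates ``convex'' with ``concave'' and does not explicitly verify that the unconstrained stationary point satisfies the KKT conditions of the constrained problem), but the underlying idea is identical.
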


\begin{proof}
Assumption (A1) ensures that  (BOP) is strictly convex in $p_{ij}$ for all $j\in\mathcal{N}_i$. Hence the result follows directly from (\ref{pij}) and  (\ref{lambda}).  Since $\alpha_i$ is a system parameter, we can choose $\alpha_i$ sufficiently large so that the solution obtained in (\ref{pijclosed}) is nonnegative.
\end{proof}


\subsection{Stackelberg Game}

In this section, we formulate a two-stage Stackelberg between honeybots and C\&C nodes. Honeybots behave as leaders who can learn the behaviors of the C\&C bots once they succeed in infiltrating the botnet and choose the optimal strategies to respond to the commands from C\&C bots.

The goal of honeypots is to collect as much information as possible from the botmaster. 
We consider the following game between honeypots and a C\&C bot. The honeypot node $H$ firsts chooses a response rate $p_{Hi}$ to the commands from C\&C bot $i$, and then C\&C bot $i$ observes the response and chooses an optimal rate to send information to honeybot $H$ according to (BOP). We make the following assumption on the real bots in the network.

\begin{enumerate}
\item[\textbf{(A2)}] The real bots are not strategically interacting with the C\&C bot $i$, i.e., they send messages to bot $i$ at a constant rate $p_{ij}, j\neq H, j\in\mathcal{N}_i.$
\end{enumerate}

The above assumption holds because bots are non-human driven, pre-programmed to perform the same routine logic and communications as
coordinated by the same botmaster \cite{Gu_Scurity08_BotMiner}. Under Assumption (A2), the strategic interactions exist only between honeybots and C\&C nodes.

The honeypot node $H$ has a certain cost when it responds to the botnet. This can be either because of the potential harm that it can cause on the system or due to the cost of implementing commands from the botmaster. We consider the following honeypot optimization problem (HOP), where node $H$ aims to maximize its utility function $U_H: \mathbb{R}_+\times \mathbb{R}\rightarrow \mathbb{R}_+$ as follows:
\begin{eqnarray}\label{UH}
\nonumber \textrm{(HOP)} \max_{p_{Hi} \in \mathcal{F}_H} & U_H (p_{iH}, p_{Hi}):=\ln (p_{iH} + \xi_H) - \beta^H_i p_{Hi},
\end{eqnarray}
where $\xi_H \in\mathbb{R}_{++}$ is a positive system parameter; $\beta^H_i$ is the cost of honeybot $H$ responding to the bot node $i$; $p_{Hi}$ is the message sending rate from honeybot node $H$ to C\&C bot $i$ and $p_{Hi}$ is the rate of  C\&C bot $i$ sending commands to $H$.

$\mathcal{F}_H$ denotes the feasible set of the honeypot node $H$. We let $\mathcal{F}_H= \{p_{Hi}, 0 \leq p_{Hi} \leq p_{Hi,\max}\}$, where $p_{Hi,\max} \in\mathbb{R}_{++}$ is a positive parameter that can be chosen to be sufficiently large.
The logarithmic part of the utility function (\ref{UH}) is used to model the property of diminishing returns of an information source. The value of receiving an additional piece of information from the C\&C bot decreases as the total number of messages received by the honeypot increases. 


The interactions between honeypot $H$ and C\&C node $i$ can be captured by the Stackelberg game model $\Xi_S:=\langle (i, H),  (U_i, U_H), (\mathcal{F}_i, \mathcal{F}_H)\rangle$, and Stackelberg equilibrium can be used as a solution concept to characterize the outcome of the game.

\begin{definition}[Stackelberg Equilibrium]
\label{def:SE}
Let $\pi_{iH}^*(\cdot) : \mathbb{R}_+^{n_i} \rightarrow \mathbb{R}_+$ be the unique best response of the C\&C bots to the response rate $p_{Hi}$ of the honeypots. An action profile $(\mathbf{p}_{i}^*, \  p_{Hi}^*) \in \mathcal{F}_i \times \mathcal{F}_H$ is a {\it Stackelberg equilibrium} if
$\mathbf{p}_{i}^*= \pi_{iH}(p^*_{Hi}), $
 and the following inequality holds
$
U_H (\pi_{iH}(p^*_{Hi}), p^*_{Hi}) \geq U_H (\pi_{iH}(p_{Hi}), p_{Hi}), \ \forall p_{Hi} \in \mathcal{F}_H.
$
\end{definition}

\begin{theorem}
Under Assumption (A1), the nonzero-sum continuous-kernel Stackelberg game $\Xi_S$ admits a Stackelberg equilibrium.
\end{theorem}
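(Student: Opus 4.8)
The plan is to treat this as a standard existence result for Stackelberg equilibria in continuous-kernel games: once the follower (the C\&C bot $i$) has a unique, continuous best response, the leader's (honeybot's) problem reduces to maximizing a scalar continuous function over a compact interval, and existence follows from the Weierstrass extreme value theorem. No fixed-point argument is needed, since the honeybot is a single leader with a one-dimensional decision variable.

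First I would pin down the follower's best response. Applying Theorem~\ref{thmBOP} with $j = H$ gives the unique maximizer of (BOP) as a function of the honeybot's response rate $p_{Hi}$ (which enters (BOP) as the factor $p_{ji}$ with $j=H$):
\[
\pi_{iH}(p_{Hi}) = \frac{T_{iH}\,p_{Hi}}{T_{iH}\,p_{Hi} + S_{-H}}\, A_i \;-\; \frac{1}{\alpha_i},
\]
where I abbreviate $A_i := \big(C_i + \tfrac{1}{\alpha_i}\sum_{j\in\mathcal{N}_i} c_{ij}\big)/c_{iH}$ and $S_{-H} := \sum_{j\in\mathcal{N}_i,\, j\neq H} T_{ij}\,p_{ji}$. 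By Assumption (A2) every $p_{ji}$ with $j\neq H$ is a constant, and by Assumption (A1) each summand is nonzero, so $S_{-H}$ is a strictly positive constant. Consequently the denominator $T_{iH}p_{Hi}+S_{-H}\ge S_{-H}>0$ never vanishes on $\mathcal{F}_H$, and $\pi_{iH}$ is continuous (indeed strictly increasing) on $[0,p_{Hi,\max}]$.

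Next I would form the reduced leader objective obtained by substituting this best response into (HOP),
\[
\widetilde{U}_H(p_{Hi}) \;:=\; U_H\big(\pi_{iH}(p_{Hi}),\,p_{Hi}\big) \;=\; \ln\!\big(\pi_{iH}(p_{Hi}) + \xi_H\big) \;-\; \beta^H_i\, p_{Hi},
\]
and check it is well defined and continuous on the compact set $\mathcal{F}_H=[0,p_{Hi,\max}]$. The only thing that can fail is positivity of the logarithm's argument; since $\pi_{iH}$ is increasing, its minimum over $\mathcal{F}_H$ is attained at $p_{Hi}=0$, where $\pi_{iH}(0)+\xi_H = \xi_H - 1/\alpha_i$. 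Choosing $\alpha_i$ sufficiently large --- the same regime invoked in Theorem~\ref{thmBOP} to guarantee nonnegativity of the C\&C allocation --- keeps this strictly positive, so $\widetilde{U}_H$ is continuous throughout. Weierstrass' theorem then yields a maximizer $p_{Hi}^*\in\mathcal{F}_H$, and setting $\mathbf{p}_i^* = \pi_{iH}(p_{Hi}^*)$ produces a pair satisfying both conditions of Definition~\ref{def:SE}, i.e., a Stackelberg equilibrium.

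The main obstacle is not the compactness argument, which is routine, but verifying the two regularity conditions that make the reduced objective a genuine continuous function: (i) nonvanishing of the best-response denominator, which relies on at least one real bot contributing a strictly positive aggregate $S_{-H}$ (secured jointly by (A1) and (A2)); and (ii) positivity of $\pi_{iH}(p_{Hi})+\xi_H$ across the entire feasible interval, which is where the ``$\alpha_i$ sufficiently large'' hypothesis is really needed. Once these are established the existence claim is immediate; the substance of the proof lies in confirming that the composition of Theorem~\ref{thmBOP}'s closed form with the honeybot's logarithmic utility inherits continuity on a compact domain.
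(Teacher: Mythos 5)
Your proof is correct, but it takes a more explicit route than the paper. The paper's proof is a two-line citation: it observes that the follower's objective is strictly concave (the paper writes ``convex'') in the relevant variable under (A1), notes that $\mathcal{F}_H$ and $\mathcal{F}_i$ are compact, and invokes Corollary 4.4 of Ba\c{s}ar and Olsder to conclude existence. You instead unpack what that corollary does in this special case: you compute the unique reaction function $\pi_{iH}$ from Theorem~\ref{thmBOP}, verify it is continuous (using (A1) and (A2) to get $S_{-H}>0$ so the denominator never vanishes), check that the composed leader objective is a genuine continuous function on the compact interval $[0,p_{Hi,\max}]$ (which is where the ``$\alpha_i$ sufficiently large'' hypothesis enters, to keep the argument of the logarithm positive at $p_{Hi}=0$), and finish with Weierstrass. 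The trade-off: the paper's citation is shorter and the cited result covers the case of a set-valued reaction map via generalized Stackelberg equilibria, whereas your argument needs the reaction to be single-valued and continuous --- but it is self-contained, shows exactly which assumption is used where, and is more honest about the boundary behavior at $p_{Hi}=0$ (where (A1) fails for the honeybot's own term and the closed-form reaction formally returns $-1/\alpha_i$), a point the paper's one-liner glosses over. Both arguments establish the claim.
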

\begin{proof}
The utility function of C\&C bot $i$ is strictly convex for all $p_{Hi} \neq 0$ under Assumption (A1). Since $\mathcal{F}_H$ and $\mathcal{F}_B$ are  compact sets, by Corollary 4.4 of \cite{Basar99dynamicnoncooperative}, the game admits a Stackelberg equilibrium solution.
\end{proof}

Under Assumption (A1), the unique best response $\pi_{iH}(\cdot)$ can be obtained from (\ref{pijclosed}) for sufficiently large $\alpha_i$ as follows:
\begin{equation}\label{pijclosedresponse}
p_{iH} = \pi_{iH} (p_{Hi}) = C_H  \left(\frac{T_{iH}p_{Hi}}{T_{iH}p_{Hi} + I_{-H}}\right) \ -\frac{1}{\alpha_i},
 \end{equation}
 where
$I_{-H} =\sum_{j\neq H, j\in\mathcal{N}_i}T_{ij}p_{ji}$ is the number of responses from real bots weighted by their trust values
and
$C_H:= \frac{C_i + \frac{1}{\alpha_i}\sum_{j\in\mathcal{N}_i} c_{ij}}{c_{iH}}$.

Letting $\bar{\xi}_H= 1/\alpha_i + \xi_H$ and substituting (\ref{pijclosedresponse}) in (HOP),  we  arrive at the following optimization problem faced by the honeybot node $H$: 
\begin{eqnarray}
\label{HOPUtility}
\nonumber \max_{p_{Hi}\in \mathcal{F}_H} U_H (\pi_{iH}(p_{Hi}), p_{Hi}):= \ \ \ \ \ \ \ \ \ \ \ \ \ \ \ \ \ \ \ \ \ \ \ \  \\   \ln \left(C_H \left(\frac{T_{iH}p_{Hi}}{T_{iH}p_{Hi} + I_{-H}}\right)\right.
\left.+ \bar{\xi}_H \right) - \beta^H_i p_{Hi}.
\end{eqnarray}
\begin{theorem}\label{thmSE}
Under Assumptions (A1) and (A2), the Stackelberg equilibrium solution $(\mathbf{p}_{i}^*, \  p_{Hi}^*)$ of the game $\Xi_S$ is unique and can be found as follows:
\begin{eqnarray}\label{SEsolOrg}
\nonumber p_{Hi}^*&=&\frac{C_H I_{-H}}{2T_{iH}(C_H+ \xi_{H})} \left( \sqrt{1  + 4 \frac{T_{iH} (C_H + \bar{\xi}_H)}{I_{-H} C_H \beta^H_i}} -1\right)\\
& & +\ \frac{I_{-H} \xi_H}{T_{iH}(C_H+ \xi_{H})},
\end{eqnarray}
and $p^*_{iH} = \pi_{iH}(p_{Hi}^*)$ and $p^*_{ij} = \pi_{ij} (p_{ij})$ for $j\neq H, j\in\mathcal{N}_i$.
\end{theorem}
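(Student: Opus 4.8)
The plan is to collapse the bilevel Stackelberg problem into a single scalar optimization for the leader (the honeybot) and then solve that in closed form. By Theorem~\ref{thmBOP}, for any announced response rate $p_{Hi}$ the follower (C\&C bot $i$) has a \emph{unique} best response $\pi_{iH}(p_{Hi})$ given by (\ref{pijclosedresponse}), and under Assumption (A2) the remaining entries $p_{ij}$, $j\neq H$, are held fixed. Consequently the entire equilibrium profile is pinned down by the single optimal leader action $p_{Hi}^*$, so it suffices to maximize the reduced utility (\ref{HOPUtility}) over the compact interval $\mathcal{F}_H$. Existence of a Stackelberg equilibrium is already supplied by the preceding theorem; what remains are the two claims of uniqueness and the explicit form of $p_{Hi}^*$.

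First I would show that the composed objective in (\ref{HOPUtility}) is strictly concave in $p_{Hi}$. Writing the argument of the logarithm as a single rational function $\frac{(C_H+\bar{\xi}_H)T_{iH}p_{Hi}+\bar{\xi}_H I_{-H}}{T_{iH}p_{Hi}+I_{-H}}$, the log term splits as $\ln\big((C_H+\bar{\xi}_H)T_{iH}p_{Hi}+\bar{\xi}_H I_{-H}\big)-\ln\big(T_{iH}p_{Hi}+I_{-H}\big)$, while the penalty $-\beta^H_i p_{Hi}$ is affine. Differentiating twice, the sign of the second derivative reduces to the inequality $\frac{(C_H+\bar{\xi}_H)T_{iH}p_{Hi}+\bar{\xi}_H I_{-H}}{T_{iH}p_{Hi}+I_{-H}}\le C_H+\bar{\xi}_H$, which is equivalent to $C_H\ge 0$ and hence always holds (strictly, since $C_H>0$). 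Strict concavity on the compact set $\mathcal{F}_H$ delivers a unique maximizer, and combined with the uniqueness of $\pi_{iH}$ this yields uniqueness of the full Stackelberg equilibrium.

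Next I would impose the first-order stationarity condition. After clearing the two denominators the derivative of the log collapses sharply: the cross term $(C_H+\bar{\xi}_H)(T_{iH}p_{Hi}+I_{-H})-\big((C_H+\bar{\xi}_H)T_{iH}p_{Hi}+\bar{\xi}_H I_{-H}\big)$ simplifies to $C_H I_{-H}$, so the condition becomes $\big((C_H+\bar{\xi}_H)T_{iH}p_{Hi}+\bar{\xi}_H I_{-H}\big)\big(T_{iH}p_{Hi}+I_{-H}\big)=T_{iH}C_H I_{-H}/\beta^H_i$, a quadratic in $p_{Hi}$. Its discriminant simplifies through the identity $(C_H+2\bar{\xi}_H)^2-4(C_H+\bar{\xi}_H)\bar{\xi}_H=C_H^2$, which is exactly what produces the square-root factor $\sqrt{1+4T_{iH}(C_H+\bar{\xi}_H)/(I_{-H}C_H\beta^H_i)}$ appearing in (\ref{SEsolOrg}). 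Selecting the unique nonnegative root and regrouping the residual linear coefficient then yields the stated closed form for $p_{Hi}^*$. Since $p_{Hi,\max}$ can be taken arbitrarily large, this root lies in the interior of $\mathcal{F}_H$ and is therefore the global maximizer; back-substituting into (\ref{pijclosedresponse}) recovers $p_{iH}^*=\pi_{iH}(p_{Hi}^*)$, and Theorem~\ref{thmBOP} supplies the components $p_{ij}^*$ for $j\neq H$.

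The main obstacle I anticipate is the bookkeeping of the last step: driving the stationarity condition to a clean quadratic and massaging its positive root into the precise grouping of (\ref{SEsolOrg}) requires carefully tracking the two parameters $\xi_H$ and $\bar{\xi}_H=1/\alpha_i+\xi_H$, which enter the follower's best response and the leader's penalty asymmetrically. The concavity reduction, although conceptually the crux of uniqueness (it is not automatic that composing the concave $U_H$ with the \emph{nonlinear} best response $\pi_{iH}$ preserves concavity), collapses to the trivial fact $C_H>0$ and so is not the true bottleneck.
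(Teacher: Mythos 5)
Your proposal follows essentially the same route as the paper's proof: substitute the follower's unique best response (\ref{pijclosedresponse}) into the leader's objective (\ref{HOPUtility}), invoke concavity, and solve the resulting first-order quadratic (\ref{FOQuadEqn}) for its nonnegative root. The only substantive addition is your explicit second-derivative verification of strict concavity (which indeed reduces to $C_H>0$), where the paper merely asserts convexity of the program; the quadratic you derive and its discriminant identity match the paper's.
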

\begin{proof}
The problem described in (\ref{HOPUtility}) is a convex program with the utility function $U_H$ convex in $p_{Hi}$ and convex set $\mathcal{F}$. Hence  the first-order optimality condition yields
\begin{flushleft}
$C_H I_{-H} T_{iH}=$
\end{flushleft}
\begin{equation}\label{FOQuadEqn}
\beta^H_i (I_{-H} + p_{Hi} T_{iH}) (C_H p_{Hi} T_{iH} + (I_{-H} + p_{Hi} T_{iH}) \bar{\xi}_H),
\end{equation}
which is a quadratic equation to be  solved for $p_{Hi}$ and its nonnegative solution of (\ref{FOQuadEqn}) is given in (\ref{SEsolOrg}). Since $p_{Hi,\max}$ is chosen sufficiently large and (\ref{SEsolOrg}) is non-negative,  $p_{iH}^*$ is a feasible solution. The equilibrium solution for bot $i$ hence follows from (\ref{pijclosedresponse}).
\end{proof}

In order to provide insights into the solution obtained in (\ref{SEsolOrg}), we make the following assumptions based on common structures of the botnets.
\begin{enumerate}
\item [\textbf{(A3)}] The real bots controlled by C\&C bot $i$ have identical features, i.e.,   $c_{ij} = \bar{c}_i$, $p_{ij} = \bar{p}_i$ and $T_{ij}= \bar{T}_i$ for all $j\neq H, j\in\mathcal{N}_i$.
\item [\textbf{(A4)}] The size of the real bots controlled by C\&C bot $i$ is much larger than the size of honeybots.
\item [\textbf{(A5)}] We let $\bar{\xi}_H=0$.
\end{enumerate}

Assumption (A5) is valid due to the freedom of choosing parameter $\xi_H$ in (HOP). Without loss of generality, we can let $\xi_H= \frac{1}{\alpha}_i$ and hence $\bar{\xi}_H=0$.
Assumption (A3) holds if real bots controlled by C\&C bot $i$ are of the same type, for example, Windows non-expert Facebook users. These type of users are commonly the target of botnets. Under (A3), we can simplify the expressions in (\ref{SEsolOrg}) and obtain $C_H = \frac{C_i}{c_i} + \frac{n_i}{\alpha_i}$, $I_{-H} = n_i^B\bar{T}_i\bar{p}_i$.

 Assumption (A4) is built upon the fact that one C\&C node in botnets often controls thousands of bots and the size of honeybots are often comparably small due to their implementation costs \cite{Provos:2007}. Under (A2), we have $I_{-H} \gg p_{Hi} T_{iH}$, then (\ref{HOPUtility}) can be rewritten as
 \begin{equation}\label{HOPUtility}
\tilde{U}_H (\pi_{iH}(p_{Hi}), p_{Hi})= \ln \left(C_H \left(\frac{T_{iH}p_{Hi}}{ I_{-H}}\right) + \bar{\xi}_H \right) - \beta^H_i p_{Hi}.
\end{equation}

\begin{corollary}\label{cor2}
Under Assumptions (A1), (A2) and (A4), the Stackelberg equilibrium solution $(\mathbf{p}_{i}^*, \  p_{Hi}^*) $ of the game $\Xi_S$ is given by
\begin{equation}\label{SimplifiedSolution}
p_{Hi}^*= \left(\frac{1}{\beta^H_i} - \frac{I_{-H} \xi_H}{C_H T_{iH} + T_{iH} \xi_H}\right)^+,
\end{equation}
where $(\cdot)^+ = \max\{0, \cdot\}$;
$p^*_{iH} = \pi_{iH}(p_{Hi}^*)$ and $p^*_{ij} = \pi_{ij} (p_{ij})$ for $j\neq H, j\in\mathcal{N}_i.$
\end{corollary}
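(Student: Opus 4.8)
The plan is to prove Corollary \ref{cor2} directly from the simplified objective (\ref{HOPUtility}) rather than by passing to a limit in the exact equilibrium (\ref{SEsolOrg}) of Theorem \ref{thmSE}. Under (A4) the denominator $T_{iH}p_{Hi}+I_{-H}$ in the best response collapses to $I_{-H}$, so the composite honeybot payoff becomes $\tilde{U}_H(p_{Hi}) = \ln\!\left(C_H T_{iH} p_{Hi}/I_{-H} + \bar{\xi}_H\right) - \beta^H_i p_{Hi}$, which is now affine inside the logarithm. First I would record the structural facts that make the first-order condition both necessary and sufficient: with $g:=C_H T_{iH}/I_{-H}>0$ the argument $g\,p_{Hi}+\bar{\xi}_H$ is increasing and affine, so its logarithm is strictly concave, the term $-\beta^H_i p_{Hi}$ is affine, and $\mathcal{F}_H=[0,p_{Hi,\max}]$ is compact and convex. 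Hence $\tilde{U}_H$ is strictly concave on $\mathcal{F}_H$ and has a unique maximizer, which establishes the uniqueness asserted in the corollary.

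Next I would compute the stationarity condition. Differentiating gives $\partial \tilde{U}_H/\partial p_{Hi} = g/(g\,p_{Hi}+\bar{\xi}_H) - \beta^H_i$, and setting this to zero yields $g\,p_{Hi}+\bar{\xi}_H = g/\beta^H_i$. Because the logarithm's argument is affine, the stationarity equation is \emph{linear} in $p_{Hi}$ (no quadratic arises, in contrast to the exact case (\ref{FOQuadEqn})), so it solves in closed form to give the interior stationary point $\hat{p}_{Hi} = 1/\beta^H_i - \bar{\xi}_H I_{-H}/(C_H T_{iH})$, which is of the bracketed form appearing in (\ref{SimplifiedSolution}).

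Finally I would pass from the unconstrained stationary point to the constrained maximizer over $\mathcal{F}_H$ via KKT. Since $\tilde{U}_H$ is strictly concave, the constrained optimum is the projection of $\hat{p}_{Hi}$ onto $[0,p_{Hi,\max}]$. I would argue the upper bound is inactive — $p_{Hi,\max}$ is taken arbitrarily large and $\hat{p}_{Hi}\le 1/\beta^H_i$ is bounded — so only the nonnegativity multiplier can bind; complementary slackness then encodes exactly the truncation $(\cdot)^+=\max\{0,\cdot\}$, producing (\ref{SimplifiedSolution}). The leader's strategy $p^*_{iH}=\pi_{iH}(p_{Hi}^*)$ and the non-honeybot allocations $p^*_{ij}=\pi_{ij}(p_{ij})$ then follow by substituting $p_{Hi}^*$ into the best-response map (\ref{pijclosedresponse}) and the (BOP) solution (\ref{pijclosed}).

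The main obstacle I anticipate is not the algebra but the corner-solution justification: one must use the KKT stationarity and complementary-slackness conditions to verify that the upper-bound constraint never binds and that the $(\cdot)^+$ truncation genuinely yields the constrained maximizer — in particular that when $\hat{p}_{Hi}<0$ the boundary point $p_{Hi}^*=0$ indeed maximizes $\tilde{U}_H$ over $\mathcal{F}_H$, which follows from strict concavity and monotonicity of the derivative. A secondary subtlety is carrying the parameters $\bar{\xi}_H$, $\xi_H$, and $C_H=C_i/c_i+n_i/\alpha_i$ consistently through the substitution so that the reported denominator $T_{iH}(C_H+\xi_H)$ is reconciled with the form obtained from the first-order condition.
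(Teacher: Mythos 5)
Your proposal is correct in substance and takes a mildly but genuinely different route from the paper. The paper's own proof applies the approximation (A4) \emph{inside} the exact first-order condition (\ref{FOQuadEqn}) --- replacing $I_{-H}+p_{Hi}T_{iH}$ by $I_{-H}$ there, which collapses the quadratic to a linear equation --- and then appeals to Theorem \ref{thmSE} and the largeness of $p_{Hi,\max}$. You instead approximate the \emph{objective} first, obtaining $\tilde{U}_H$, and then optimize; since differentiation and the substitution commute here, both routes land on the same linear stationarity equation $g\,p_{Hi}+\bar{\xi}_H=g/\beta^H_i$ with $g=C_HT_{iH}/I_{-H}$. What your route buys is a rigorous account of the truncation: the paper never actually justifies the $(\cdot)^+$ in (\ref{SimplifiedSolution}), whereas your strict-concavity/KKT argument (projection of the unconstrained stationary point onto $[0,p_{Hi,\max}]$, with the upper bound verified inactive because $\hat{p}_{Hi}\le 1/\beta^H_i$) supplies exactly the missing step. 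One caveat you correctly flag but should not lose: your closed form $\hat{p}_{Hi}=1/\beta^H_i-\bar{\xi}_HI_{-H}/(C_HT_{iH})$ does not literally match the printed (\ref{SimplifiedSolution}), whose second term has numerator $\xi_H$ and denominator $T_{iH}(C_H+\xi_H)$. This is not a defect of your argument --- the paper's own recipe of substituting $I_{-H}$ for $I_{-H}+p_{Hi}T_{iH}$ in (\ref{FOQuadEqn}) also yields your expression, not the stated one --- so the discrepancy is internal to the paper (apparently inherited from the form of (\ref{SEsolOrg}), and reconcilable only up to further unstated approximations such as $\alpha_i\to\infty$ and $\xi_H\ll C_H$). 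Your derivation is the one that actually follows from the stated assumptions.
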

\begin{proof}
From (A4), we can rewrite (\ref{FOQuadEqn})  by replacing $I_{-H} + p_{Hi} T_{iH}$ with $I_{-H}$. Since all the terms in (\ref{SimplifiedSolution}) is bounded, we can let $p_{Hi,\max}$ be sufficiently large and arrive at (\ref{SimplifiedSolution}).  The result then follows from Theorem \ref{thmSE}.
\end{proof}


\begin{corollary}\label{cor3}
Let the size of real bots under C\&C be $n_i^B$ and the size of the honeybots represented by super node $H$ $n_i^H$. Note that $n_i= n_i^B+n_i^H.$ Under Assumptions (A1) - (A5),  the Stackelberg equilibrium of the game $\Xi_S$ is given by
\begin{equation}
p_{Hi}^*=  \frac{1}{\beta^H_i},\ \ \ p^*_{ij} = \pi_{ij} (p_{ij}),\end{equation}
for $j\neq H, j\in\mathcal{N}_i$, and the equilibrium solution of  C\&C node $i$ is composed of two terms given by
$p^*_{iH} = p^*_{iH, S} + p^*_{iH, N},$
 with the first term independent of $n_i^H$,
\begin{equation}
p^*_{iH,S} =\frac{T_{iH}}{T_{iH} + \beta^H_i n_i^B \bar{T}_i\bar{p}_i}  \left(\frac{C_i}{c_i}+\frac{n_i^B}{\alpha_i}\right) - \frac{1}{\alpha_i},
\end{equation}
and the second term dependent on $n_i^H$,
\begin{equation}
p^*_{iH,S} =\frac{n_i^H T_{iH}}{T_{iH} + \beta^H_i n_i^B \bar{T}_i\bar{p}_i}.
\end{equation}
\end{corollary}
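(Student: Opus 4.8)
The plan is to obtain Corollary~\ref{cor3} as a direct specialization of Corollary~\ref{cor2} once the extra structural hypotheses (A3) and (A5) are imposed; no new optimization problem has to be solved. First I would take the honeybot's optimal leader strategy established in Corollary~\ref{cor2}, equation (\ref{SimplifiedSolution}), which equals $1/\beta^H_i$ less a subtractive offset proportional to $\bar{\xi}_H$. Applying (A5), which sets $\bar{\xi}_H = 0$, annihilates this offset and renders the projection $(\cdot)^+$ inactive, so the honeybot plays the interior point $p_{Hi}^* = 1/\beta^H_i$. I would record explicitly that $1/\beta^H_i > 0$ and that $p_{Hi,\max}$ can be chosen large enough for this value to lie in $\mathcal{F}_H$, ruling out any boundary case.

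Next I would insert $p_{Hi}^* = 1/\beta^H_i$ into the C\&C node's unique best response (\ref{pijclosedresponse}),
\[
p_{iH}^* = \pi_{iH}(p_{Hi}^*) = C_H\,\frac{T_{iH}\,p_{Hi}^*}{T_{iH}\,p_{Hi}^* + I_{-H}} - \frac{1}{\alpha_i},
\]
and clear the factor $1/\beta^H_i$ by multiplying numerator and denominator by $\beta^H_i$, which collapses the fraction to $T_{iH}/(T_{iH} + \beta^H_i I_{-H})$. Assumption (A3) then supplies the homogeneous aggregates $I_{-H} = n_i^B \bar{T}_i \bar{p}_i$ and $C_H = C_i/c_i + n_i/\alpha_i$, while (A4) certifies that the large-population regime under which (\ref{SimplifiedSolution}) was derived is in force. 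At this point $p_{iH}^*$ is a single closed form in the botnet parameters.

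The remaining step is purely algebraic: I would substitute $n_i = n_i^B + n_i^H$ into $C_H$ and distribute the prefactor $T_{iH}/(T_{iH} + \beta^H_i n_i^B \bar{T}_i \bar{p}_i)$ across the three summands $C_i/c_i$, $n_i^B/\alpha_i$, and $n_i^H/\alpha_i$. Grouping the two summands free of $n_i^H$ together with the trailing $-1/\alpha_i$ reproduces $p^*_{iH,S}$, and the lone summand proportional to $n_i^H$ furnishes $p^*_{iH,N}$, giving the advertised split $p^*_{iH} = p^*_{iH,S} + p^*_{iH,N}$. The real bots' responses $p^*_{ij} = \pi_{ij}(p_{ij})$ for $j \neq H$ carry over unchanged, since (A2) fixes their rates non-strategically and Theorem~\ref{thmSE} already characterizes them.

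There is no genuine conceptual obstacle; the argument is a substitute-and-collect exercise built on Corollary~\ref{cor2} and the best-response map (\ref{pijclosedresponse}). The only points requiring care are bookkeeping ones: confirming that (A5), through the relation $\bar{\xi}_H = 1/\alpha_i + \xi_H$, truly zeros the offset in (\ref{SimplifiedSolution}), and tracking the $1/\alpha_i$ factors faithfully when $C_H$ is split, so that the $n_i^H$-dependent contribution lands in the form claimed for $p^*_{iH,N}$.
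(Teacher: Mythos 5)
Your proposal is correct and follows exactly the paper's route: the paper's entire proof is the one-line remark that the result ``immediately follows from Corollary~\ref{cor2} using (A3) and (A5),'' and you simply carry out that substitution explicitly via the best response (\ref{pijclosedresponse}) and the split of $C_H$ through $n_i = n_i^B + n_i^H$. You even flag the two bookkeeping subtleties the paper glosses over --- that the offset in (\ref{SimplifiedSolution}) must be read as proportional to $\bar{\xi}_H$ for (A5) to annihilate it, and that a faithful distribution of $C_H$ leaves a residual $1/\alpha_i$ factor on the $n_i^H$-dependent term that the printed $p^*_{iH,N}$ omits.
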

\begin{proof}
The result immediately follows from Corollary \ref{cor2} using (A3) and (A5).
\end{proof}

\begin{remark}
From Corollary \ref{cor3},  we can see that under Assumption (A1), the equilibrium response strategy is inversely proportional to the unit cost $\beta^H_i$.
We can see that the number of command and control messages harvested from the botnet is affine in the number of successfully infiltrated honeybots. The growth rate of the number of messages is given by
\begin{equation}
\label{eq:r_iH}
r^*_{iH} := \frac{\partial p^*_{iH} }{\partial n_i^H} = \frac{T_{iH}}{\beta^H_in_i^B \bar{T}_i \bar{p}_i + T_{iH}}.
\end{equation}
The growth rate is dependent of the trust value $T_{iH}$. Honeybots can harvest more information from the botnet if they are more trusted. The growth rate is also dependent on the number of the real bots controlled by C\&C bot $i$. As $n_i^B \rightarrow \infty$, the growth rate $r^*_{iH}\rightarrow 0$, i.e.,  size of honeybots will not affect the number of messages received by the network.
\end{remark}
%

Trust values can change over time and can either modeled by a random process or by some assessment rules adopted by the attacker. We can separate this into different subsections of discussion. We can also consider a dynamic optimization problem as well by having belief/trust as the state. This can be done through using beta or Dirichlet distributions. 


\begin{figure}[t]
\centering
  \includegraphics[width=2.5in]{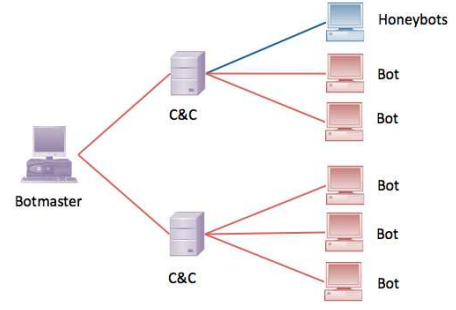}\\
  \caption{Illustration of a hierarchical social botnet}\label{botnet2}
\end{figure}

\section{Model of Honeybot Deployment and Botnet Growth}
\label{sec:deployment}
In what follows, a macroscopic model of the dynamics of the number of bots at time $t$, denoted $x_{1}(t)$, and the number of honeybots, denoted $x_{2}(t)$, is presented.  We then formulate an optimization problem for determining the number of honeypot nodes to introduce into the network.

  \subsection{Botnet and honeypot growth models}
  The bots are assumed to send spam messages, containing links to malware, with rate $r$.  Each message is sent to each of the $d$ neighbors of the bot, where $d$ is the average node degree.  Hence in each time interval $dt$, $rd \ dt$ spam messages are sent.  Since the number of valid nodes is $N-x_{1}(t)$, the number of messages reaching valid nodes is equal to $rd\frac{N-x_{1}(t)}{N} \ dt$.

The number of nodes that become bots depends on the behavior of the valid users and the number of links that have been blacklisted. Each user clicks on a spam link with probability $q$.  If the link has been blacklisted, then the user will be blocked from visiting the infected site; otherwise, the user's account is compromised and the device becomes part of the botnet.

To determine the probability that a link has been blacklisted, we assume that each bot is independently given a set of $k$ malicious links, out of $M$ links total.  The probability that a given link has been given to a specific honeybot is therefore $\frac{k}{M}$.  Hence the probability that a link has not been blacklisted is the probability that that link has not been given to any honeybot, which is equal to $\left(1 - \frac{k}{M}\right)^{x_{2}}$.  We assume that:
\begin{itemize}
\item[\textbf{(A6)}] The number of links given to each honeybot, $k$, satisfies $k \ll M$.
\end{itemize}
Under (A6), $\left(1 - \frac{k}{M}\right)^{x_{2}}$ can be approximated by $\left(1 - \frac{kx_{2}}{M}\right)$.

Finally, we assume that the infected devices are discovered and cleaned with rate $\mu_{1}$.  This leads to dynamics
\begin{equation}
\label{eq:bot_dynamics}
\dot{x}_{1}(t) = rdqx_{1}\left(1 - \frac{kx_{2}}{M}\right)\frac{N-x_{1}}{N} - \mu_{1}x_{1}.
\end{equation}

Honeybot nodes are inducted into the botnet in a similar fashion. We make the following assumptions regarding the honeybot population:
\begin{itemize}
\item[\textbf{(A7)}] The number of honeybots that are not part of the botnet, denoted $z$, is constant.
\item[\textbf{(A8)}] The number of honeybots is small compared to the total number of users, so that $\frac{z}{z+N} \approx \frac{z}{N}$.
\end{itemize}
Assumption (A7) can be guaranteed by creating new, uninfected honeybots when existing honeybots infiltrate the botnet.
 As with real users, honeybot nodes cannot follow links that have been blacklisted; however, unlike real users, honeybot nodes will attempt to follow any link with probability $1$.  The botmaster detects and removes honeybots with rate $\mu_{2}$.  The honeybot population is therefore defined by
\begin{equation}
\label{eq:honeypot_dynamics}
\dot{x}_{2}(t) = rdx_{1}\left(1 - \frac{kx_{2}}{M}\right)\frac{z}{N} - \mu_{2}x_{2}.
\end{equation}
\begin{proposition}
\label{prop:dynamic_equilibria}
The dynamics defined by (\ref{eq:bot_dynamics}) and (\ref{eq:honeypot_dynamics}) have two equilibria, given by $(x_{1}, x_{2}) = (0,0)$ and
\begin{equation}
\label{eq:dynamic_equilibria}
x_{1}^{\ast} = \frac{N\mu_{2}M(rdq-\mu_{1})}{rdq\mu_{2}M + rdkz\mu_{1}}, \quad x_{2}^{\ast} = \frac{\left(rd - \frac{\mu_{1}}{q}\right)z}{\frac{rdkz}{M} + \mu_{2}}.
\end{equation}
\end{proposition}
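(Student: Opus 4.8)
The plan is to locate all equilibria by setting the right-hand sides of (\ref{eq:bot_dynamics}) and (\ref{eq:honeypot_dynamics}) equal to zero and solving the resulting algebraic system. The first step is to exploit the structure of the bot equation, which factors as
$$x_{1}\left[rdq\left(1 - \frac{kx_{2}}{M}\right)\frac{N - x_{1}}{N} - \mu_{1}\right] = 0,$$
so that every equilibrium lies on one of two branches: either $x_{1} = 0$, or the bracketed expression vanishes. This factorization is what ultimately limits the number of equilibria to two and organizes the rest of the argument.

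On the branch $x_{1} = 0$, substituting into the stationarity condition for (\ref{eq:honeypot_dynamics}) leaves $-\mu_{2}x_{2} = 0$, forcing $x_{2} = 0$ and giving the trivial equilibrium $(0,0)$. On the branch $x_{1} \neq 0$, I would introduce the shorthand $B := 1 - \frac{kx_{2}}{M}$ and rewrite the two stationarity conditions as $rdq\,B\,\frac{N - x_{1}}{N} = \mu_{1}$ and $rd\,B\,x_{1}\,\frac{z}{N} = \mu_{2}x_{2}$. One first notes $B \neq 0$, since $B = 0$ would force $\mu_{1} = 0$ in the first condition; this lets me solve the first relation for $B(N - x_{1})$ and the second for $Bx_{1}$. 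Eliminating $x_{1}$ between these and substituting $x_{2} = \frac{M(1 - B)}{k}$ collapses the system into a single \emph{linear} equation in $B$, whose unique solution is $B = \frac{q\mu_{2}M + k\mu_{1}z}{q(rdkz + \mu_{2}M)}$. Because $B$ is determined uniquely, back-substitution yields $x_{2}^{\ast} = \frac{M(1 - B)}{k}$ and then $x_{1}^{\ast} = N\left(1 - \frac{\mu_{1}}{rdq\,B}\right)$ uniquely, which establishes that there are exactly two equilibria.

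The remaining work is purely the final simplification, and this is where I expect the main (bookkeeping) obstacle to lie. After inserting the value of $B$ into the expressions for $x_{1}^{\ast}$ and $x_{2}^{\ast}$ and clearing denominators, one must verify the cancellation of the cross terms proportional to $rdk\mu_{1}z$ in the numerator of $x_{1}^{\ast}$, which is precisely what leaves the clean factor $\mu_{2}M(rdq - \mu_{1})$ and reconciles the result with the closed forms stated in (\ref{eq:dynamic_equilibria}); the companion simplification $1 - B = \frac{kz(rd - \mu_{1}/q)}{rdkz + \mu_{2}M}$ then delivers $x_{2}^{\ast}$ in the claimed form. No stability or positivity claim is needed for this statement, so the argument reduces to this existence-and-uniqueness elimination followed by the algebraic tidy-up.
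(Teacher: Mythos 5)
Your proposal is correct and follows essentially the same route as the paper's proof: factor the bot equation to split into the branches $x_{1}=0$ and $rdq\left(1-\frac{kx_{2}}{M}\right)\frac{N-x_{1}}{N}=\mu_{1}$, dispose of the trivial branch via $-\mu_{2}x_{2}=0$, and on the nonzero branch eliminate one variable and substitute into the honeybot stationarity condition to recover (\ref{eq:dynamic_equilibria}). Your reparametrization by $B=1-\frac{kx_{2}}{M}$ is only a cosmetic variant of the paper's direct substitution of $x_{1}=N\bigl(1-\frac{\mu_{1}}{rdq(1-kx_{2}/M)}\bigr)$, though it does make the uniqueness of the nontrivial equilibrium (via linearity in $B$) slightly more explicit than the paper's one-line treatment.
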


\begin{proof}
Equation (\ref{eq:bot_dynamics}) reaches equilibrium if $x_{1} = 0$ or if $rdq\left(1 - \frac{kx_{2}}{M}\right)\frac{N-x_{1}}{N} - \mu_{1} = 0$.  If $x_{1} = 0$, then (\ref{eq:honeypot_dynamics}) reaches equilibrium when $x_{2} = 0$.

If $rdq\left(1 - \frac{kx_{2}}{M}\right)\frac{N-x_{1}}{N} - \mu_{1} = 0$, solving for $x_{1}$ yields $x_{1} = N\left(1 - \frac{\mu_{1}}{rdq\left(1 - \frac{kx_{2}}{M}\right)}\right)$.  Substituting into (\ref{eq:honeypot_dynamics}) gives the equilibrium of (\ref{eq:dynamic_equilibria}).
\end{proof}

The quantity $rdq$ corresponds to the rate at which new nodes are inducted into the botnet, while $\mu_{1}$ is the rate at which nodes are cleaned and exit the botnet.  Thus if $rdq < \mu_{1}$, then the number of bots converges to zero, while $rdq > \mu_{1}$ implies that the number of bots converges to a nonzero steady-state value.

   Since network security policies are typically updated intermittently, while the dynamics of (\ref{eq:bot_dynamics}) and (\ref{eq:honeypot_dynamics}) converge rapidly, we base our subsequent analysis on the steady-state values of $x_{1}$ and $x_{2}$, and derive the optimal number of honeybots to introduce into the system in steady-state.
   In order to prove that this problem is well-defined, we first examine the stability properties of each equilibrium in the following theorem.

\begin{theorem}
\label{theorem:stability_properties}
If $\mu_{1} > rdq$, then $(x_{1},x_{2}) = (0,0)$ is asymptotically stable.  If $\mu_{1} < rdq$, then $(x_{1}^{\ast}, x_{2}^{\ast})$ is asymptotically stable in the limit as $M \rightarrow \infty$, $N \rightarrow \infty$.
\end{theorem}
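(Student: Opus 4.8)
My plan is to establish both claims by linearization (Lyapunov's indirect method), exploiting the fact that for a planar autonomous system a hyperbolic equilibrium is asymptotically stable exactly when the Jacobian $J$ of the right-hand side satisfies $\mathrm{tr}(J)<0$ and $\det(J)>0$. Writing $f_1(x_1,x_2)$ and $f_2(x_1,x_2)$ for the right-hand sides of (\ref{eq:bot_dynamics}) and (\ref{eq:honeypot_dynamics}), I would first record the four partials: $\partial f_1/\partial x_1 = rdq(1-\tfrac{kx_2}{M})\tfrac{N-2x_1}{N}-\mu_1$, $\partial f_1/\partial x_2 = -\tfrac{rdqk x_1(N-x_1)}{MN}$, $\partial f_2/\partial x_1 = rd(1-\tfrac{kx_2}{M})\tfrac{z}{N}$, and $\partial f_2/\partial x_2 = -\tfrac{rdk x_1 z}{MN}-\mu_2$.

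For the trivial equilibrium the computation is immediate. Evaluating at $(x_1,x_2)=(0,0)$ kills every term carrying a factor $x_1$ or $x_2/M$, so $J(0,0)=\begin{pmatrix} rdq-\mu_1 & 0\\ rdz/N & -\mu_2\end{pmatrix}$, which is lower triangular. Its eigenvalues are therefore read off the diagonal as $rdq-\mu_1$ and $-\mu_2$. Since $\mu_2>0$ always, and $rdq-\mu_1<0$ precisely when $\mu_1>rdq$, the origin is asymptotically stable in exactly the claimed regime, proving the first assertion.

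For the nontrivial equilibrium the key simplification is to use the equilibrium relation itself. At $(x_1^{\ast},x_2^{\ast})$ with $x_1^{\ast}\neq 0$ we have $A(N-x_1^{\ast})/N=\mu_1$, where $A:=rdq(1-\tfrac{kx_2^{\ast}}{M})$; substituting this into $\partial f_1/\partial x_1$ collapses it to $-A\,x_1^{\ast}/N$, which is negative. Thus both diagonal entries of $J(x_1^{\ast},x_2^{\ast})$ are negative, giving $\mathrm{tr}(J)<0$; and because $\partial f_1/\partial x_2<0$ while $\partial f_2/\partial x_1>0$, the cross term $-(\partial f_1/\partial x_2)(\partial f_2/\partial x_1)$ is positive, so $\det(J)$ is a sum of two positive contributions and $\det(J)>0$. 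The role of the limit $M\to\infty$, $N\to\infty$ is to make this sign analysis legitimate and clean: as $M\to\infty$ one has $kx_2^{\ast}/M\to 0$, so $A\to rdq>\mu_1$ (using $rdq>\mu_1$), guaranteeing $x_1^{\ast}>0$, $N-x_1^{\ast}>0$ and the positivity of $A$; moreover the coupling entry $\partial f_1/\partial x_2\to 0$ and $x_1^{\ast}/N\to 1-\mu_1/(rdq)$, so $J$ tends to a lower-triangular matrix with diagonal $-(rdq-\mu_1)$ and $-\mu_2$, both strictly negative. Hence the equilibrium is asymptotically stable in the limit.

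I expect the main obstacle to be purely bookkeeping rather than conceptual: carrying the messy $x_1^{\ast}$, $x_2^{\ast}$ expressions through the Jacobian is error-prone, and the cleanest route is to avoid substituting the closed forms of (\ref{eq:dynamic_equilibria}) and instead eliminate $\mu_1$ via the equilibrium identity $A(N-x_1^{\ast})/N=\mu_1$, as above. A secondary subtlety is justifying the order and relative rates of the double limit; since the trace-determinant signs hold for every finite $(M,N)$ for which the positivity conditions are met, the limit is only needed to certify those positivity conditions (and, via (A6), the validity of the linearized population model), so the conclusion is insensitive to how $M$ and $N$ jointly tend to infinity.
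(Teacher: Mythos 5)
Your proof is correct, and for the trivial equilibrium it coincides with the paper's argument (same Jacobian $A_{00}$; you read the eigenvalues off the triangular structure where the paper invokes diagonal dominance of $-A_{00}$ — both are immediate). For the nontrivial equilibrium, however, you take a genuinely different and cleaner route. The paper substitutes the closed-form expressions from (\ref{eq:dynamic_equilibria}) into the Jacobian, obtains four unwieldy entries $a_{11},\dots,a_{22}$, and then checks row diagonal dominance of the negated Jacobian via the inequality (\ref{eq:diagonal_dominant}), which it can only verify in the limit $M\to\infty$, $N\to\infty$ (the off-diagonal term of the first row vanishes only in that limit). You instead eliminate $\mu_1$ through the equilibrium identity $A(N-x_1^{\ast})/N=\mu_1$, which collapses $\partial f_1/\partial x_1$ to $-Ax_1^{\ast}/N$, and then apply the planar trace--determinant criterion: both diagonal entries are negative and the off-diagonal entries have opposite signs, so $\mathrm{tr}(J)<0$ and $\det(J)>0$. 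This buys you two things. First, it avoids the error-prone bookkeeping of the explicit Jacobian entries. Second, it is strictly stronger than what the theorem claims: since $0<x_1^{\ast}<N$ and $1-kx_2^{\ast}/M>0$ can be checked directly from (\ref{eq:dynamic_equilibria}) for every finite $M$ and $N$ whenever $rdq>\mu_1$, your sign analysis establishes asymptotic stability of $(x_1^{\ast},x_2^{\ast})$ without any limit, whereas the paper's diagonal-dominance route genuinely needs $M\to\infty$. Your closing observation that the double limit only serves to certify positivity conditions (and in fact those conditions hold unconditionally) is the right diagnosis of why the paper's hypothesis is stronger than necessary.
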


\begin{proof}
An equilibrium point of a nonlinear dynamical system is asymptotically stable if its linearization is asymptotically stable at that point~\cite[Theorem 3.7]{khalil1992nonlinear}. 
The linearization of (\ref{eq:bot_dynamics}) and (\ref{eq:honeypot_dynamics}) around  $(0,0)$ is given by
\begin{displaymath}
A_{00} = \left(
\begin{array}{cc}
rdq - \mu_{1} & 0 \\
\frac{rdz}{N} & -\mu_{2}
\end{array}
\right).
\end{displaymath}
If $\mu_{1} > rdq$, then $-A_{00}$ is diagonally dominant, and hence has eigenvalues with positive real part [ref].  The eigenvalues of $A_{00}$ therefore have negative real part, implying that the linearization around $(0,0)$ is asymptotically stable.  The linearization $A_{x_{1}^{\ast}x_{2}^{\ast}}$ around $(x_{1}^{\ast}, x_{2}^{\ast})$ is given by
\begin{IEEEeqnarray*}{rCl}
A_{x_{1}^{\ast}x_{2}^{\ast}} &=& \left(
\begin{array}{cc}
a_{11} & a_{12} \\
a_{21} & a_{22}
\end{array}
\right), \textrm{ \ where \ \ }
\end{IEEEeqnarray*}
\begin{IEEEeqnarray*}{rCl}
a_{11} &=& \mu_{1} - \frac{rdq}{N}\left(\frac{\mu_{2}M + \frac{\mu_{1}kz}{q}}{\mu_{2}M + rdkz}\right)  \\
&& \cdot \left(-N + \frac{2\mu_{1}rdkz + \mu_{2}M}{rdq\mu_{2}M + rdkz\mu_{1}}\right),
\end{IEEEeqnarray*}
\begin{IEEEeqnarray*}{rCl}
 a_{12} &=& -\frac{rdqkN}{M}\left(1 - \frac{\mu_{1}rdkz + \mu_{2}M}{rdq\mu_{2}M + rdkz\mu_{1}}\right)\\
 && \cdot \left(\frac{\mu_{1}rdkz + \mu_{2}M}{rdq\mu_{2}M + rdkz\mu_{1}}\right),
 \end{IEEEeqnarray*}
\begin{IEEEeqnarray*}{rCl}
a_{21} &=& \frac{rdz}{N}\left(\frac{\mu_{2}M + \mu_{1}kz/q}{\mu_{2}M + rdkz}\right),
\end{IEEEeqnarray*}
\begin{IEEEeqnarray*}{rCl}
 a_{22} &=&-\frac{rdkz}{M}\left(1 - \frac{\mu_{1}rdkz + \mu_{2}M}{rdq\mu_{2}M + rdkz\mu_{1}}\right) - \mu_{2}.
\end{IEEEeqnarray*}
To prove that $A_{x_{1}^{\ast}x_{2}^{\ast}}$ has eigenvalues with negative real part, we examine $-A_{x_{1}^{\ast}x_{2}^{\ast}}$.  The second row is clearly diagonally dominant, since the diagonal element is positive and the off-diagonal element is negative.  The first row is diagonally dominant if
\begin{eqnarray}
\label{eq:diagonal_dominant}
\nonumber-\frac{rdqkN}{M}\left(1 - \frac{\mu_{1}rdkz + \mu_{2}M}{rdq\mu_{2}M + rdkz\mu_{1}}\right)\left(\frac{\mu_{1}rdkz + \mu_{2}M}{rdq\mu_{2}M + rdkz\mu_{1}}\right) \\
< \mu_{1} + rdq\left(\frac{\mu_{2}M + \frac{\mu_{1}kz}{q}}{\mu_{2}M + rdkz} - \frac{2\mu_{1}rdkz + \mu_{2}M}{rdq\mu_{2}M + rdkz\mu_{1}}\right).
\end{eqnarray}

In the limit as $M \rightarrow \infty$, the left-hand side of (\ref{eq:diagonal_dominant}) converges to zero while the right-hand side reduces to $\mu_{1} + rdq\left(1 - \frac{1}{rdqN}\right)$,
which is positive for $N$ sufficiently large.  Hence $-A_{x_{1}^{\ast}x_{2}^{\ast}}$ is diagonally dominant, and therefore has eigenvalues with positive real part, implying that $(x_{1}^{\ast}, x_{2}^{\ast})$ is a stable equilibrium point.
\end{proof}

\subsection{Computation of system parameters}
The parameter $\mu_{2}$ determines the rate at which honeypot nodes are discovered and removed by the botmaster, and hence can be calculated by observing the lifetime of deployed honeypots (see Section \ref{sec:PAS}).  Similarly, the number of received messages $p$ and the cost $\tau$ can be estimated by averaging over the set of deployed honeypots over time.  The fraction of malicious links $\frac{k}{M}$ that are given to a single bot or honeypot is estimated by using the assumption that links are distributed independently and uniformly at random by the botmaster, so that the probability that a given link has been received by a honeypot is $\left(1 - \frac{k}{M}\right)^{x_{2}}$.  This probability can be estimated by analyzing the set of malicious links received by new honeypots, which combined with knowledge of $x_{2}$ enables computation of $\frac{k}{M}$.  The rate at which spam messages are sent by bots, denoted $r$, is estimated by the number of instruction messages received by the honeypots.

The parameters $\mu_{1}$ and $q$, equal to the rate at which bots are removed from the botnet, and the fraction of malicious links that are followed by users, depend on user behavior.  These parameters can be estimated using existing data sets of user behavior~\cite{kanich2008spamalytics}.  Furthermore, to obtain an upper bound on the effectiveness of the botnet, the parameter $q$ can be set equal to $1$, implying that a valid user always clicks any link to the malware executable (the worst case).  The average node degree, $d$, is estimated based on existing analyses of the degree distribution of social networks~\cite{gjoka2011crawling}.

\subsection{Extension to heterogeneous networks}
Typical social networks follow a non-uniform degree distribution.  We present a model for the bot and honeypot population dynamics as follows.  Let $N_{d}$ denote the total number of users with degree $d$, and let $x_{1}^{d}(t)$ and $x_{2}^{d}(t)$ denote the number of bots and honeypots with degree $d$ at time $t$.  The average degree of the network is equal to $\overline{d}$.  We make the assumption that:
  \begin{itemize}
  \item[\textbf{(A9)}] The average degree of the infected users is equal to the average degree of the social network as a whole.
  \end{itemize}
  The total number of spam messages sent by bots in time interval $dt$ is equal to $r\overline{d}x_{1} \ dt$, each of which has not been blacklisted with probability $\left(1 - \frac{kx_{2}}{M}\right)$.  The probability that the recipient of a message has degree $d$ and has not been infected is equal to
\begin{displaymath}
Pr(\mbox{degree $d$, not infected}) = \frac{N_{d}-x_{1}^{d}}{N_{d}}\frac{d N_{d}}{\overline{d}N} = \frac{(N_{d}-x_{1}^{d})d}{\overline{d}N}.
\end{displaymath}
This implies that the dynamics of $x_{1}^{d}(t)$ are given by
\begin{equation}
\label{eq:bot_dynamics_heterogeneous}
\dot{x}_{1}^{d}(t) = \frac{rdq}{N}x_{1}\left(1 - \frac{kx_{2}}{M}\right)(N_{d} - x_{1}^{d}) - \mu_{1}x_{1}^{d},
\end{equation}
where $N_{d}$ is the number of user accounts of degree $d$.

Similarly, the probability that the recipient of a spam message is a honeypot node of degree $d$ that has not been infected yet is $\frac{z_{d}^{\ast}}{N}$, where $z_{d}$ is the number of honeybots of degree $d$ that have not joined the botnet, leading to dynamics of $x_{2}^{d}(t)$ described by
\begin{equation}
\label{eq:honeypot_dynamics_heterogeneous}
\dot{x}_{2}^{d}(t) = \frac{rd}{N}x_{1}\left(1 - \frac{kx_{2}}{M}\right)z_{d}^{\ast} - \mu_{2}x_{2}^{d}.
\end{equation}

\begin{proposition}
The dynamics (\ref{eq:bot_dynamics_heterogeneous}) and (\ref{eq:honeypot_dynamics_heterogeneous}) have equilibrium points at $x_{1}^{d} = x_{2}^{d} = 0$ for all $d$ and
\begin{IEEEeqnarray}{rCl}
\label{eq:x_1_heterogeneous}
x_{1}^{d\ast} &=& \left[rdq\left(1-\frac{ kr\overline{d}_{z}\left(1 - \frac{\mu_{1}}{r\overline{d}q}\right)z}{rkzd_{z} + \mu_{2}M} - \frac{\mu_{1}}{r\overline{d}q}\right) + \mu_{1}\right]^{-1} \\
\nonumber
&& \cdot \left(rdq\left(1 -\frac{ kr\overline{d}_{z}\left(1 - \frac{\mu_{1}}{r\overline{d}q}\right)z}{rkzd_{z} + \mu_{2}M} - \frac{\mu_{1}}{r\overline{d}q}\right)\right), \\
\label{eq:x_2_heterogeneous}
x_{2}^{d \ast} &=& \frac{rd}{\mu_{2}}\left(1 - \frac{kr\overline{d}_{Z}\left(1 - \frac{\mu_{1}}{r\overline{d}q}\right)z}{rkzd_{Z} + \mu_{2}M} - \frac{\mu_{1}}{r\overline{d}q}\right)z_{d}.
\end{IEEEeqnarray}
\end{proposition}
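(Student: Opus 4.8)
The plan is to mirror the structure of Proposition \ref{prop:dynamic_equilibria}, exploiting the fact that the degree-indexed system (\ref{eq:bot_dynamics_heterogeneous})--(\ref{eq:honeypot_dynamics_heterogeneous}) is coupled across $d$ only through the scalar aggregates $x_{1} = \sum_{d} x_{1}^{d}$ and $x_{2} = \sum_{d} x_{2}^{d}$, which enter every equation solely through the common factor $\frac{r}{N}x_{1}\left(1 - \frac{kx_{2}}{M}\right)$. First I would dispatch the trivial equilibrium: setting $x_{1}^{d} = x_{2}^{d} = 0$ for all $d$ forces $x_{1} = x_{2} = 0$, and since both right-hand sides are proportional to $x_{1}$, they vanish identically.

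For the nontrivial equilibrium, I would introduce the scalar \emph{infection pressure}
\[
\omega := \frac{x_{1}}{N}\left(1 - \frac{kx_{2}}{M}\right).
\]
Each stationarity condition $\dot{x}_{1}^{d} = 0$ and $\dot{x}_{2}^{d} = 0$ is then linear in the corresponding component once $\omega$ is treated as a constant, giving
\[
x_{1}^{d\ast} = \frac{rdq\,\omega\, N_{d}}{rdq\,\omega + \mu_{1}}, \qquad x_{2}^{d\ast} = \frac{rd\,\omega}{\mu_{2}}\, z_{d}^{\ast}.
\]
These already exhibit the claimed dependence on $d$, $N_{d}$ and $z_{d}^{\ast}$; what remains is to pin down the single unknown $\omega$.

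The closure step is where Assumption (A9) does the work. Summing $\dot{x}_{1}^{d}$ over $d$, and writing $\sum_{d} d N_{d} = \overline{d} N$ together with $\sum_{d} d x_{1}^{d} = \overline{d}\, x_{1}$ (the latter being exactly (A9): the average degree of infected users equals $\overline{d}$), collapses the aggregate bot equation to $\dot{x}_{1} = \frac{rq\overline{d}}{N}x_{1}(1 - kx_{2}/M)(N - x_{1}) - \mu_{1}x_{1}$, i.e. the homogeneous dynamics (\ref{eq:bot_dynamics}) with $d \mapsto \overline{d}$. Summing $\dot{x}_{2}^{d}$ and setting $\overline{d}_{z} z = \sum_{d} d z_{d}^{\ast}$ likewise recovers (\ref{eq:honeypot_dynamics}). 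Proposition \ref{prop:dynamic_equilibria}, applied with the averaged degrees, then yields the aggregate equilibrium $(x_{1}^{\ast}, x_{2}^{\ast})$ and hence the value $\omega = 1 - \frac{\mu_{1}}{r\overline{d}q} - \frac{k r \overline{d}_{z}\left(1 - \mu_{1}/(r\overline{d}q)\right)z}{rkz d_{z} + \mu_{2}M}$. Substituting this $\omega$ back into the per-degree expressions produces (\ref{eq:x_1_heterogeneous}) and (\ref{eq:x_2_heterogeneous}).

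The main obstacle is precisely this closure. The per-degree conditions do not self-close, because $\omega$ is fixed by the sums $x_{1}, x_{2}$, and $x_{1} = \sum_{d} rdq\,\omega\, N_{d}/(rdq\,\omega + \mu_{1})$ is genuinely nonlinear in $\omega$. Assumption (A9) is exactly what linearizes $\sum_{d} d x_{1}^{d}$ and lets the aggregate reduce to the already-solved homogeneous system; without it, $\omega$ would satisfy a transcendental equation with no closed form. I would also reconcile a few typographical points against the stated formulas—the factor $N_{d}$ appears to be dropped in the displayed $x_{1}^{d\ast}$, and the degree symbols $\overline{d}_{z}$, $d_{z}$, $\overline{d}_{Z}$, $d_{Z}$ should be unified—but none of these affect the argument.
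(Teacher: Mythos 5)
Your proposal is correct and follows essentially the same route as the paper's proof: sum the per-degree dynamics over $d$, invoke (A9) to reduce the aggregate system to the homogeneous one with degree $\overline{d}$, solve for the aggregate equilibrium, and substitute back into the per-degree stationarity conditions. You merely make explicit what the paper leaves terse (the scalar infection pressure $\omega$, the role of (A9) in closing the sum, and the apparent dropped factor $N_{d}$ in the displayed $x_{1}^{d\ast}$), all of which is consistent with the paper's argument.
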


\begin{proof}
Summing (\ref{eq:bot_dynamics_heterogeneous}) over $d$ yields
\begin{displaymath}
\dot{x}_{1}(t) = \frac{x_{1}\overline{d}rq}{N}\left(1 - \frac{kx_{2}}{M}\right)(N - x_{1}) - \mu_{1}x_{1},
\end{displaymath}
which implies that in steady-state we have
\begin{equation}
\label{eq:SS_heterogeneous}
x_{1}^{\ast} = N\left(1 - \frac{\mu_{1}}{r\overline{d}q\left(1 - \frac{kx_{2}}{M}\right)}\right).
\end{equation}
Similarly, summing $\dot{x}_{2}^{d}(t)$ over $d$ results in
\begin{displaymath}
\dot{x}_{2}(t) = \frac{rx_{1}}{N}\left(1 - \frac{kx_{2}}{M}\right)z\overline{d}_{Z} - \mu_{2}x_{2},
\end{displaymath}
which combined with (\ref{eq:SS_heterogeneous}) gives
$x_{2}^{\ast} = \frac{r\overline{d}z\left(1 - \frac{\mu_{1}}{r\overline{d}q}\right)}{\frac{rkz\overline{d}_{z}}{M} + \mu_{2}}.$
The steady-state value (\ref{eq:x_2_heterogeneous}) can be obtained from (\ref{eq:honeypot_dynamics_heterogeneous}).  Similarly, $x_{1}^{d \ast}$ can be obtained from (\ref{eq:bot_dynamics_heterogeneous}).
\end{proof}



\section{Modeling of Protection and Alert System (PAS)}
\label{sec:PAS}
PAS is a coordination system that strategically deploys honeybots and designs security policies for social networks.
In this section, we focus on optimal reconfiguration of honeybots as illustrated in Fig. \ref{protection}.  We introduce a mathematical framework for finding  honeybot deployment strategies based on system models described in Sections \ref{sec:exploitation} and \ref{sec:deployment}.

\subsection{Relations between HD and HE}
We have adopted a divide-and-conquer approach in Sections \ref{sec:exploitation} and \ref{sec:deployment}, and have modeled the behavior of each system independently. However, the interdependencies between HD and HE are essential for PAS to make optimal security policies for the social network. The HE model in Section \ref{sec:exploitation} describes strategic operations of honeybots at a microscopic level while the HD model in Section \ref{sec:deployment} provides a macroscopic description of the population dynamics of bots and honeybots. These two models are interrelated  through their parameters together with the feedback control from PAS.

The interactions between bots and honeybots in the HE model occur on a time scale of seconds. The analysis of Stackelberg equilibrium in Section \ref{sec:exploitation} captures the steady-state equilibrium after a repeated or learning process of the game. Hence the equilibrium can be reached on a time scale of minutes. On the other hand, the population dynamics in HD model evolve on a larger time scale (for example, days). Hence, we can assume that the Stackelberg game has reached its equilibrium when the populations evolve at a macroscopic level. Decisions made at PAS are on a longer time scale (for example, weeks) because the processing of collected information, learning of bots and honeybots in social networks, and high-level decision on security policy in reality demand considerable amount of human resources for coordination and supervision.

\subsubsection{Trust Values and Detection Rate}
The trust values $T_{ij}$ used in HE model are related to the macroscopic detection and removal rate $\mu_2$ in HD model. As we have pointed out earlier, zero trust values are equivalent to the removal of honeybots from the botnet. Hence we can adopt a simple dynamic model to describe the change of $T_{ij}$ over a longer time period (say, days). We let $T_{ij}^0$ be the initial condition of the trust value. 
The evolution of $T_{ij}$ over the macroscopic time scale can be modeled using the following ODE:
\begin{equation}\label{TODE}
T_{ij}(t) = -\mu_2 T_{ij}(t), \ \  T_{ij}(t^0_{ij}) = T_{ij}^0. 
\end{equation}
Note that honeybots have different initial time $t^0_{ij}$. Hence from (\ref{TODE}), we obtain
\begin{equation}\label{TODE:sol}
T_{ij}(t) = e^{-\mu_2 (t-t^0_{ij})}, \ \ t \geq t^0_{ij},
\end{equation}
i.e., the trust values exponentially decay with respect to the removal rate. a threshold can be set on. From (\ref{TODE:sol}), we can obtain the mean life time of a honeybot is $1/\mu_2$. Macroscopic parameter $\mu_2$ can be estimated by the rate of change of working honeybots in the botnet, which can is known to the system, while $T_{ij}$ is a microscopic parameter and is often unknown directly to honeybots. With the ODE model in (\ref{TODE}), we can use $\mu_2$ to estimate $T_{ij}$.

\subsubsection{Honeybot and Bot Populations}
In Section \ref{sec:deployment}, the populations of bots and infiltrating honeybots are denoted by $x_1$ and $x_2$, respectively, whereas in Section \ref{sec:exploitation}, the bot size under C\&C bot is $n_i^B$. Under a hierarchical structure of botnet, illustrated in Fig. \ref{botnet2}, the total bot and honeybot populations $x_1, x_2$ are given by
\begin{equation}\label{x1Expr}
x_1 =\sum_{i=1}^m n_i^B, \ \ x_2=\sum_{i=1}^m n_i^H.
\end{equation}
If all C\&C bots are assumed to be identical, i.e., $n_i^B= \bar{n}^B, i\in\mathcal{M}, n_i^H= \bar{n}^H, i\in\mathcal{M}$, then $x_i = m \bar{n}^B$, and $x_i = m \bar{n}^H$ 
%
\subsubsection{Activity Level of Bots}
The rate $\bar{p}_i$ in (\ref{eq:r_iH}) indicates the activity level of bots when they respond to or poll information from  C\&C node $i$. This level of activity is often correlated with parameter $r$, the rate of sending out spamming messages to the social network. Assume that all C\&C bots are assumed to be identical, i.e., $\bar{p}_i=\bar{p}, i\in\mathcal{M},$ then we can let $\bar{p} = \alpha r$, where $\bar{p}$ is in messages/sec, $r$ is in messages/sec and $\alpha \in\mathbb{R}_{++}$ is a unitless positive parameter.


\begin{figure*}[t]
\centering
$\begin{array}{ccc}
\includegraphics[width=2in]{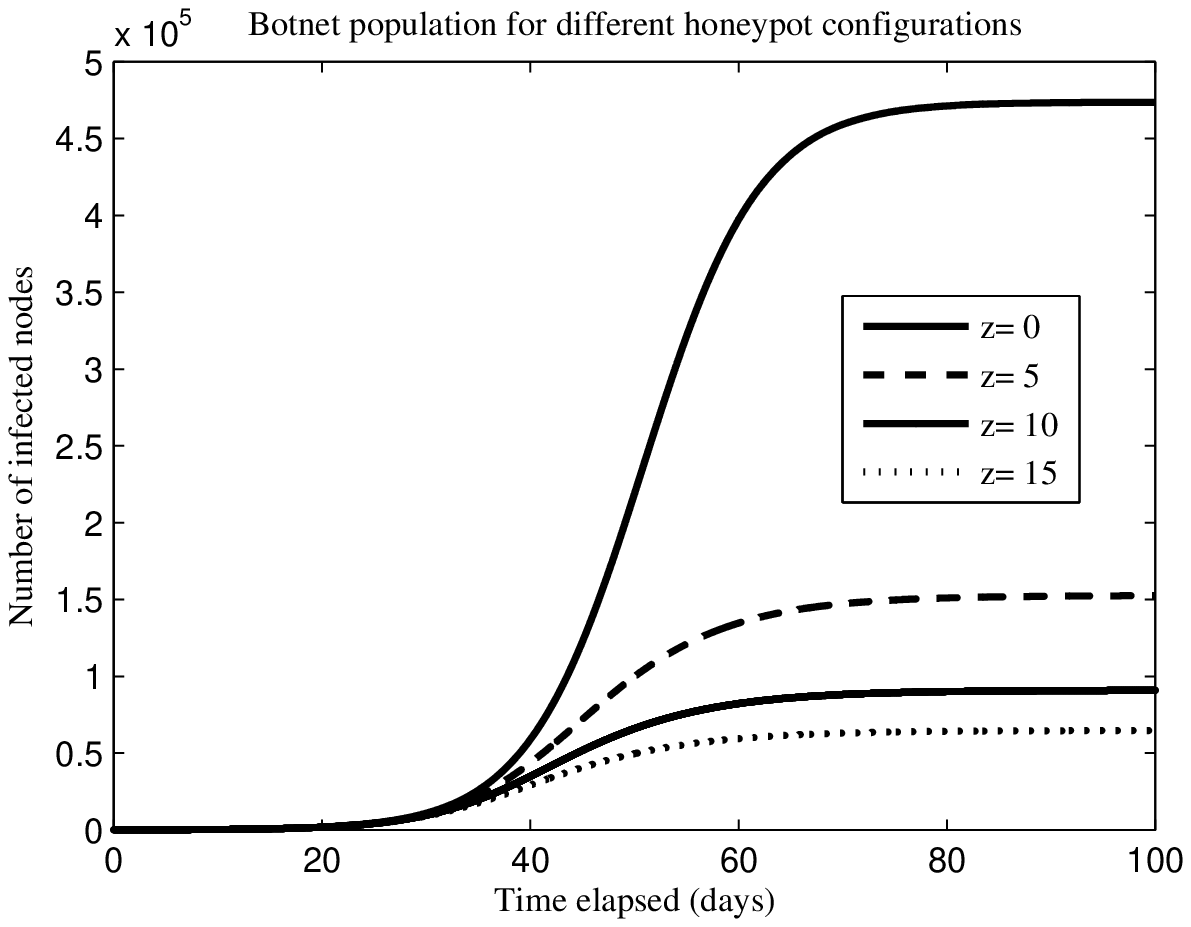} &
\includegraphics[width=2in]{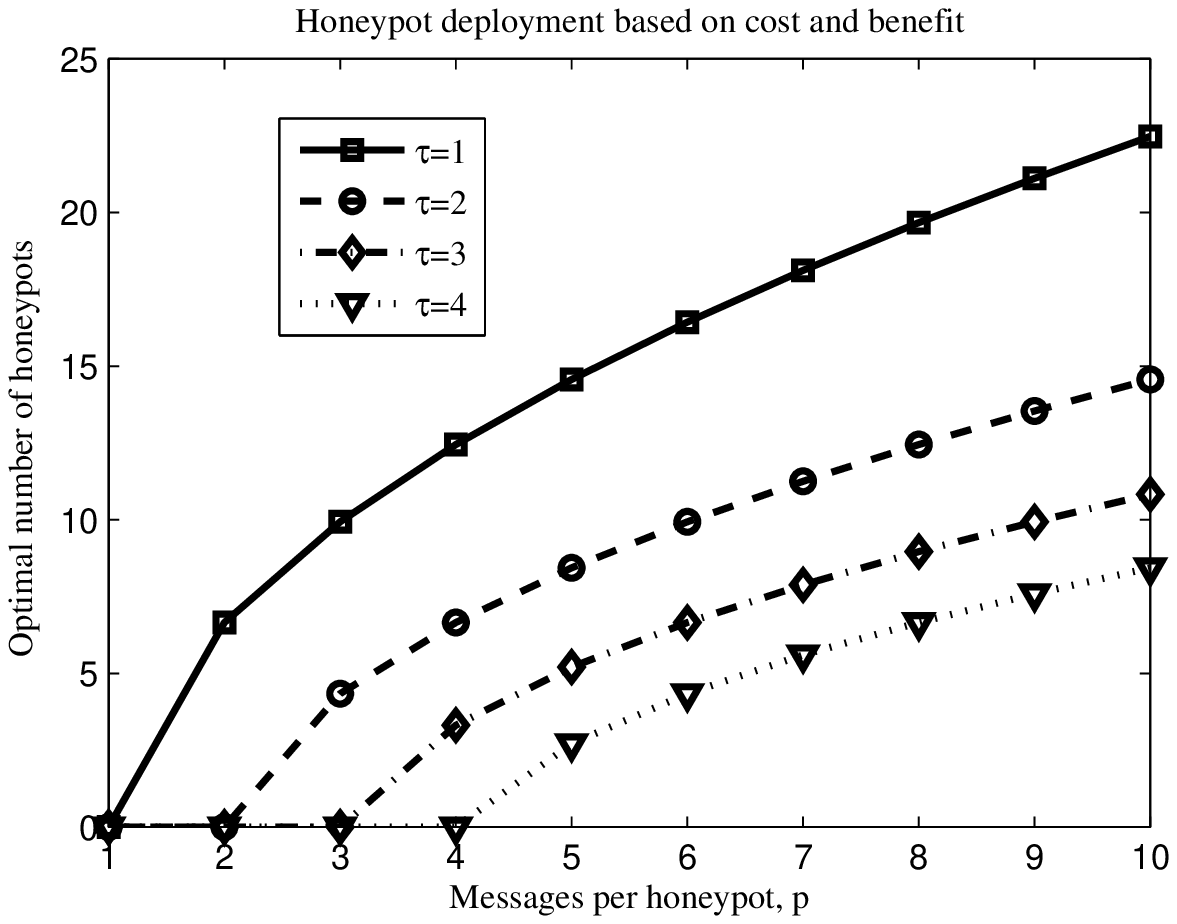} &
\includegraphics[width=2in]{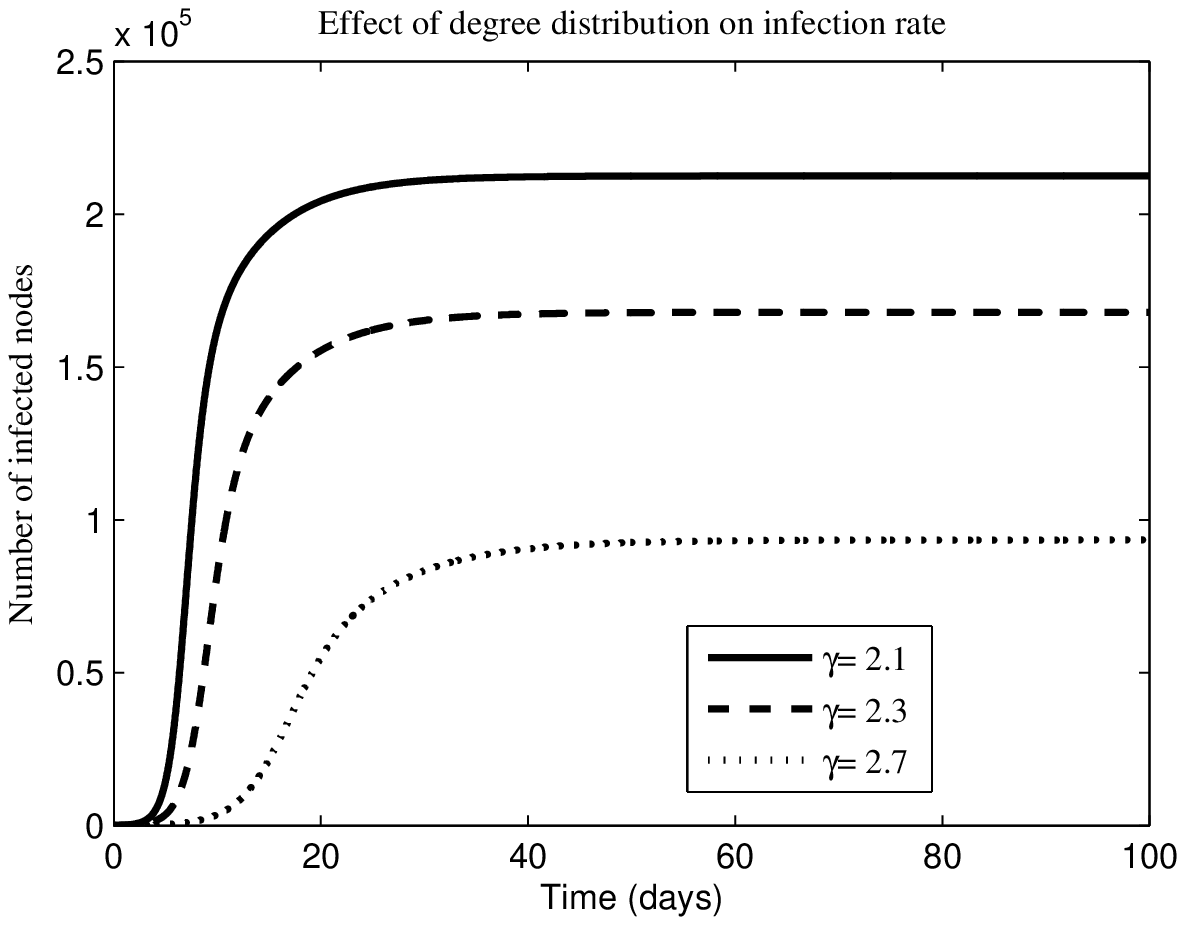} \\
(a) & (b) & (c)
\end{array}$
\caption{Simulation of our framework for a network of $N=10^{6}$ users, where each user has probability $q=0.01$ of following a malicious link, message are sent at a rate of $0.4$ messages per bot per day, and infected nodes are cleaned after $5$ days on average. (a) Effect of increasing the number of honeypot nodes on the botnet population.  Deployment of a small number of honeypots can greatly reduce the number of bots present.  Note that the population converges quickly to its equilibrium value. (b) The optimum number of bots based on (\ref{eq:z_opt}) for different costs $\tau$ and benefits $p$.  The total number of honeypots remains small for each case.  (c) Effect of degree distribution on the botnet population for number of honeypots $z=5$.  Each network is scale-free, with exponent $\gamma$ varying between networks.  A higher connectivity results in a larger number of bots.}
\label{fig:simulation}
\end{figure*}

\subsection{Cross-Layer Optimal Honeybot Deployment}
In what follows, we first derive the optimal honeybot deployment when the benefit from each honeypot is measurable.  We then combine the analysis of Sections \ref{sec:exploitation} and \ref{sec:deployment} to determine the optimal honeybot deployment, taking into account the behavior of the deployed nodes during the exploitation phase.

The goal of the honeypot operator is to maximize the number of blacklisted links that are reported to the social network.  Based on the analysis of Corollary \ref{cor3}, we assume that the number of blacklisted links is proportional to the number of honeypot nodes in the botnet in steady-state, $\mathbf{x}_{2}^{\ast}$.  The variable is the number of honeypot nodes that have not yet been inducted into the botnet, $z$.  This leads to a utility function given by $V_{H}(z) = px_{2}^{\ast}(z) - \tau(x_{2}^{\ast} + z)$,
where $p$ and $\tau$ represent the benefit (information gathered) and cost of maintaining a single honeypot node.  Substituting (\ref{eq:dynamic_equilibria}) yields
\begin{IEEEeqnarray}{rCl}
\label{eq:utility_honeypot}
\nonumber V_{H}(z) &=& p\frac{\left(rd - \frac{\mu_{1}}{q}\right)z}{\frac{rdkz}{M} + \mu_{2}} - \tau\left(\frac{\left(rd - \frac{\mu_{1}}{q}\right)z}{\frac{rdkz}{M} + \mu_{2}} + z\right)\\
 &=& \frac{(p - \tau)\left(rd - \frac{\mu_{1}}{q}\right)z}{\frac{rdkz}{M} + \mu_{2}} - \tau z.
\end{IEEEeqnarray}
The value of $z$ that maximizes (\ref{eq:utility_honeypot}) is given by the following proposition.
\begin{proposition}
\label{prop:optimum_z}
The optimum value of $z$ is given by
\begin{equation}
\label{eq:z_opt}
z^{\ast} = M\left(\frac{-\mu_{2} + \sqrt{(p-\tau)(rd - \frac{\mu_{1}}{q})\mu_{2}/\tau}}{rdk}\right).
\end{equation}
\end{proposition}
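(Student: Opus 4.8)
The plan is to treat this as a routine single-variable maximization of $V_H(z)$ over $z \ge 0$, where the only variable is the number of uninfected honeybots and every other quantity is a fixed parameter. First I would lump the constants together to expose the structure: writing $A := (p-\tau)\left(rd - \frac{\mu_1}{q}\right)$ and $a := \frac{rdk}{M}$, the utility from (\ref{eq:utility_honeypot}) becomes
\begin{equation}
V_H(z) = \frac{Az}{az + \mu_2} - \tau z. \nonumber
\end{equation}
This isolates a concave saturating information term minus a linear maintenance cost, which is exactly the shape that yields an interior optimum.

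Next I would differentiate and set the derivative to zero. Using the quotient rule, the numerator of $\frac{d}{dz}\frac{Az}{az+\mu_2}$ collapses because the $az$ cross terms cancel, leaving
\begin{equation}
V_H'(z) = \frac{A\mu_2}{(az+\mu_2)^2} - \tau = 0. \nonumber
\end{equation}
Solving gives $(az+\mu_2)^2 = A\mu_2/\tau$, and since $az+\mu_2 > 0$ I would take the positive square root, so $az + \mu_2 = \sqrt{A\mu_2/\tau}$. Solving for $z$ and substituting $a = rdk/M$ and $A = (p-\tau)(rd-\mu_1/q)$ back in reproduces the claimed expression (\ref{eq:z_opt}) after factoring out $M/(rdk)$.

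To certify that this critical point is the maximizer rather than a spurious root, I would check the second-order condition: since
\begin{equation}
V_H''(z) = -\frac{2A\mu_2 a}{(az+\mu_2)^3} < 0 \nonumber
\end{equation}
whenever $A > 0$, the objective is strictly concave on $z \ge 0$, so the unique stationary point is the global maximum. I expect the main (and really only) subtlety to be the sign and feasibility bookkeeping: the derivation requires $A>0$, i.e. both $p > \tau$ (the per-honeybot benefit exceeds its cost) and $rd > \mu_1/q$ (the botnet-persistence condition already used to guarantee a nonzero equilibrium $x_2^\ast$ in Proposition~\ref{prop:dynamic_equilibria} and Theorem~\ref{theorem:stability_properties}). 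Under these conditions the radicand is positive and the positive root is the economically meaningful one; one should also verify that the resulting $z^\ast$ is nonnegative, which holds precisely when $\sqrt{A\mu_2/\tau} \ge \mu_2$, equivalently $A \ge \mu_2\tau$. I would state this as the natural regime in which deploying honeybots is worthwhile (otherwise the constrained optimum is the boundary value $z^\ast = 0$).
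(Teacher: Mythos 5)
Your proposal is correct and follows essentially the same route as the paper's proof: differentiate $V_H(z)$, observe that the derivative is strictly decreasing (equivalently, that $V_H$ is strictly concave), and solve the first-order condition for $z$. Your additional bookkeeping on the sign conditions $p>\tau$, $rd>\mu_1/q$, and the nonnegativity threshold $A\geq\mu_2\tau$ is a welcome refinement the paper leaves implicit, but it does not change the argument.
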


\begin{proof}
Differentiating $V_{H}(z)$ with respect to $z$ yields
\begin{displaymath}
\frac{\partial V_{H}}{\partial z} = \frac{(p-\tau)\left(rd - \frac{\mu_{1}}{q}\right)\mu_{2}}{\left(\frac{rdk}{M}z + \mu_{2}\right)^{2}} - \tau.
\end{displaymath}
By inspection, $\frac{\partial V_{H}}{\partial z}$ is a strictly decreasing function of $z$, so that $V_{H}(z)$ is strictly concave.
Setting this expression equal to zero implies (\ref{eq:z_opt}).
\end{proof}

\begin{remark}
\label{remark:z_opt}
Eq. (\ref{eq:z_opt}) has several implications for the design of honeybot systems.  First, for malware that propagates rapidly (corresponding to a large $rd$ value), fewer honeybots are needed, since the malware will quickly spread to the deployed honeybot.  Second, if $\mu_{2}$ is large, then honeybots are rapidly detected and removed by the botmaster, and hence the cost of deploying honeybots outweighs the benefits.
\end{remark}

The utility function (\ref{eq:utility_honeypot}) can be augmented by incorporating the impact on the exploitation phase.  In particular, (\ref{eq:r_iH}) implies that $p = \frac{1}{1 + \frac{\beta_{i}^{H}x_{1}^{\ast}\overline{T}_{i}r}{T_{iH}}}$, which we write as
$p = \frac{1}{1+ \zeta x_{1}^{\ast}} \approx \frac{1}{\zeta x_{1}^{\ast}}$
when the number of bots is sufficiently large.  The utility function $V_{H}$ can then be written as
\begin{equation}
\label{eq:combined_utility_function}
V_{H} = \left(\frac{1}{\zeta x_{1}^{\ast}} - \tau\right)x_{2}^{\ast} - \tau z
\end{equation}
An efficient algorithm for maximizing (\ref{eq:combined_utility_function}) can be derived using the following theorem.
\begin{theorem}
\label{theorem:combined_optimization}
The problem of selecting $z$ to maximize $V_{H}$ in (\ref{eq:combined_utility_function}) is equivalent to
\begin{IEEEeqnarray}{ll}
\nonumber\max_{\theta, \phi, x_{2}^{\ast}, z} & \frac{1}{\zeta}\left(-\frac{rdq\theta^{2}}{Nrdq\phi - \mu_{1}} + \frac{M/4k}{N(rdq\left(1 - \frac{kx_{2}^{\ast}}{M}\right) - \mu_{1})}\right)  \\
 & -\tau x_{2}^{\ast} - \tau z, \\
\label{eq:constraint1}
\nonumber \mbox{s.t.} & \theta = x_{2}^{\ast} - \frac{M}{2k}, \\
\label{eq:constraint2}
 & \phi = 1- \frac{kx_{2}^{\ast}}{M}, \\
 \label{eq:constraint3}
& x_{2}^{\ast} \leq \frac{\left(rd - \frac{\mu_{1}}{q}\right)z}{\frac{rdkz}{M} + \mu_{2}}, \\
\label{eq:constraint4}
\nonumber & \frac{1}{\zeta}\frac{rdq\mu_{2}M + rdkz\mu_{1}}{(rdq-\mu_{1})\mu_{2}M} \geq \tau ,\\
 \label{eq:constraint5}
\nonumber &  z \geq 0, \ \ 0 \leq x_{2}^{\ast} \leq N,
 \end{IEEEeqnarray}
 which is a convex program.
 \end{theorem}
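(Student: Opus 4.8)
The plan is to turn the maximization of $V_{H}$ in (\ref{eq:combined_utility_function}) into a concave maximization by substituting the steady-state relations of Proposition \ref{prop:dynamic_equilibria} and introducing auxiliary variables that expose a quadratic-over-linear structure. First I would eliminate $x_{1}^{\ast}$ in favor of $x_{2}^{\ast}$ using the steady-state identity $x_{1}^{\ast} = N\left(1 - \mu_{1}/(rdq(1 - kx_{2}^{\ast}/M))\right)$ from the proof of Proposition \ref{prop:dynamic_equilibria}, which gives $\frac{1}{\zeta x_{1}^{\ast}}x_{2}^{\ast} = \frac{rdq}{\zeta N}\cdot\frac{(1 - kx_{2}^{\ast}/M)x_{2}^{\ast}}{rdq(1 - kx_{2}^{\ast}/M) - \mu_{1}}$. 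The numerator $(1 - kx_{2}^{\ast}/M)x_{2}^{\ast}$ is a concave quadratic in $x_{2}^{\ast}$, and completing the square with $\theta = x_{2}^{\ast} - \frac{M}{2k}$ yields $(1 - kx_{2}^{\ast}/M)x_{2}^{\ast} = \frac{M}{4k} - \frac{k}{M}\theta^{2}$; setting $\phi = 1 - kx_{2}^{\ast}/M$ then linearizes the denominator to $rdq\phi - \mu_{1}$. Both substitutions are affine in $x_{2}^{\ast}$, so they can be imposed as the equality constraints (\ref{eq:constraint2}) (and its companion for $\theta$) without changing the feasible region's convexity, and they rewrite the benefit term of $V_{H}$ as the displayed objective.

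Next I would relax the equilibrium identity $x_{2}^{\ast} = (rd - \mu_{1}/q)z/(rdkz/M + \mu_{2})$ to the inequality (\ref{eq:constraint3}). Its right-hand side has the form $az/(bz + \mu_{2})$ with $a = rd - \mu_{1}/q > 0$ and $b = rdk/M > 0$, which is concave and increasing on $z \geq 0$, so (\ref{eq:constraint3}) carves out a convex region. To certify that the relaxation is exact, I would read the marginal-benefit constraint (the line containing $rdq\mu_{2}M + rdkz\mu_{1}$) as, after substituting the closed form of $x_{1}^{\ast}$, a statement that $\frac{1}{\zeta x_{1}^{\ast}}$ dominates the per-node cost $\tau$ up to the scaling $N$. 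This makes the coefficient $\left(\frac{1}{\zeta x_{1}^{\ast}} - \tau\right)$ of $x_{2}^{\ast}$ in (\ref{eq:combined_utility_function}) nonnegative, so the objective is nondecreasing in $x_{2}^{\ast}$ and every optimizer saturates (\ref{eq:constraint3}); hence the relaxed program attains the same value as the original maximization over $z$, and the recovered $z$ is feasible for the latter.

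To finish, I would verify convexity of the reformulated program. The feasible set is the intersection of the affine equalities defining $\theta$ and $\phi$, the convex constraint (\ref{eq:constraint3}), the half-space given by the marginal-benefit constraint (affine in $z$), and the box $z \geq 0$, $0 \leq x_{2}^{\ast} \leq N$, and is therefore convex. For the objective, the leading term $-\theta^{2}/(rdq\phi - \mu_{1})$ is the negative of a quadratic-over-linear function and so is jointly concave in $(\theta, \phi)$ on the region $rdq\phi - \mu_{1} > 0$, i.e. wherever $x_{1}^{\ast} > 0$; the remaining terms $-\tau x_{2}^{\ast} - \tau z$ are affine. Since concavity is preserved under the affine maps $x_{2}^{\ast}\mapsto\theta$ and $x_{2}^{\ast}\mapsto\phi$, the objective is concave and the program is convex, as claimed.

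The hard part will be the two structural facts that are easy to take for granted: proving the relaxation (\ref{eq:constraint3}) is tight, which hinges entirely on interpreting the marginal-benefit constraint as $\frac{1}{\zeta x_{1}^{\ast}} \geq \tau$; and controlling the sign of the residual term $\frac{M/4k}{rdq\phi - \mu_{1}}$ that survives completing the square, since this term is convex rather than concave in $\phi$ and one must check that it does not overturn concavity of the objective on the feasible region (equivalently, that the concave quadratic-over-linear piece dominates). Everything else — the algebraic elimination of $x_{1}^{\ast}$, the concavity of the hyperbolic bound in $z$, and the quadratic-over-linear convexity fact — is routine once these two points are secured.
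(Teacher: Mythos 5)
Your route is the same as the paper's: eliminate $x_{1}^{\ast}$ via the steady-state identity, complete the square to get $\theta$ and $\phi$, relax the $x_{2}^{\ast}$--$z$ equilibrium relation to the inequality (\ref{eq:constraint3}), and use the marginal-benefit constraint to argue the relaxation is tight. You are also right to single out the residual term $\frac{M/4k}{N(rdq(1-kx_{2}^{\ast}/M)-\mu_{1})}$ as the dangerous step --- but you leave that check as ``the hard part'' rather than carrying it out, and when one carries it out, it fails. Restricted to the affine subspace defined by the equality constraints for $\theta$ and $\phi$, the reformulated benefit is just $\frac{x_{2}^{\ast}}{\zeta x_{1}^{\ast}(x_{2}^{\ast})}$, and this function is \emph{convex}, not concave, in $x_{2}^{\ast}$ on the relevant domain. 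Writing $u = kx_{2}^{\ast}/M$, $v = 1-u$ and $a = rdq$, the benefit is a positive multiple of $h(v) = \frac{v(1-v)}{av-\mu_{1}}$, and a direct computation gives
\begin{equation*}
h''(v) \;=\; \frac{2\mu_{1}(a-\mu_{1})}{(av-\mu_{1})^{3}} \;>\; 0
\end{equation*}
whenever $a = rdq > \mu_{1}$ and $av > \mu_{1}$, i.e.\ precisely in the regime where the nonzero equilibrium of Proposition \ref{prop:dynamic_equilibria} exists and $x_{1}^{\ast} > 0$. So the convex inverse piece overturns the concave quadratic-over-linear piece: their sum, evaluated on the subspace where $\theta$ and $\phi$ are tied to $x_{2}^{\ast}$, is strictly convex, and introducing $\theta$ and $\phi$ as separate variables cannot rescue this, since concavity of the program would have to survive restriction to that affine subspace.

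To be fair, the paper's own proof has exactly the same hole: it asserts that ``quadratic over linear and inverse functions are convex'' and concludes that \emph{both} terms of the objective are concave, which is a non sequitur for the second term (the reciprocal of a positive affine function is convex, so adding it with a positive coefficient destroys rather than preserves concavity). Your proposal faithfully reproduces the paper's argument and, to your credit, flags the exact spot where it breaks; but as written it cannot be completed, because the dominance you hope to verify goes the wrong way. A correct treatment would have to either abandon the claim of a convex program, or fall back on the fact that the problem is one-dimensional in $z$ and attack the first-order condition directly, as is done for the simpler objective in Proposition \ref{prop:optimum_z}.
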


 \begin{proof}
The optimization problem of selecting $z$ to maximize $V_{H}$ can be written as
\begin{equation}
\begin{array}{cc}
\mbox{maximize} & \left(\frac{1}{\zeta x_{1}^{\ast}(z)} - \tau\right)x_{2}^{\ast}(z) - \tau z \\
z\in\mathbb{R}_+ & 
\end{array}
\end{equation}
If $\frac{1}{\zeta x_{1}^{\ast}(z)} - \tau < 0$, then the objective function is monotone decreasing in $z$, leading to an optimal value $z = 0$.  To avoid this, we require $\frac{1}{\zeta x_{1}^{\ast}(z)} \geq \tau$, leading to constraint (\ref{eq:constraint4}).
Using (\ref{eq:dynamic_equilibria}), we have
\begin{equation}
\frac{x_{2}^{\ast}}{\zeta x_{1}^{\ast}} = \frac{\left(x_{2}^{\ast} - \frac{kx_{2}^{\ast 2}}{M}\right)rdq}{N\left(rdq\left(1 - \frac{kx_{2}^{\ast}}{M}\right)-\mu_{1}\right)} = \frac{rdq\left(-\frac{k}{M}\left(x_{2}^{\ast} - \frac{M}{2k}\right)^{2} + \frac{M}{4k}\right)}{N\left(rdq\left(1 - \frac{kx_{2}^{\ast}}{M}\right) - \mu_{1}\right)}.
\end{equation}
Substituting $\theta = \left(x_{2}^{\ast} - \frac{M}{2k}\right)$ and $\phi = \left(1 - \frac{kx_{2}^{\ast}}{M}\right)$ leads to the objective function
\begin{IEEEeqnarray}{rCl}
\label{eq:equivalent_objective}
\nonumber V_{H}(x_{2}^{\ast}, z, \theta, \phi) &=& \frac{1}{\zeta}\left(-\frac{rdq\theta^{2}}{Nrdq\phi - \mu_{1}}+\frac{M/4k}{N(rdq\left(1 - \frac{kx_{2}^{\ast}}{M}\right) - \mu_{1})}\right) \\
 && - \tau x_{2}^{\ast} - \tau z
\end{IEEEeqnarray}
Since quadratic over linear and inverse functions are convex, the first two terms of (\ref{eq:equivalent_objective}) are concave, and hence $V_{H}$ is concave.

Finally, the fact that the objective function is increasing as a function of $x_{2}^{\ast}$ and decreasing as a function of $z$ implies that the constraint (\ref{eq:constraint3}) holds with equality at the optimum, so that the relationship between $x_{2}^{\ast}$ and $z$ in (\ref{eq:dynamic_equilibria}) is satisfied.  This constraint is convex due to Proposition \ref{prop:optimum_z}.
\end{proof}

The convex optimization approach presented in Theorem \ref{theorem:combined_optimization} is used to select a honeybot deployment strategy in order to maximize the level of infiltration into the botnet and the amount of data gathered during the exploitation phase.  Once inducted into the botnet, the honeybots follow the Stackelberg equilibrium strategy of Section \ref{sec:exploitation} and use the collected data to generate malware signatures and create URL blacklists.  The parameters of (\ref{eq:combined_utility_function}) are updated in response to changes in botnet behavior observed during the exploitation phase.


\section{Simulation Study}
\label{sec:simulation}
We evaluated our proposed method using Matlab simulation study, described as follows.  A network consisting of $N=10^{6}$ nodes was generated, with degree $d=100$ (consistent with observations of the average degree of social networks~\cite{gjoka2011crawling}).  The rate at which malware messages are sent is given by $r=0.4$ messages per bot day, and the rate at which nodes are disinfected and removed from the botnet is $\mu_{1} = 0.2$, an average lifetime for each bot of $5$ days.  These statistics are based on the empirical observations of~\cite{Thomas2010}.  Based on~\cite{grier2010spam}, we estimate that the probability of a user clicking on a spam link is given by $q = 0.01$.  It is assumed that the fraction of malware links given to each bot is equal to $k/M = 0.01$.  The rate at which honeybots are detected and removed is equal to $\mu_{2} = 0.5$.  In each case, we assume that there are $50$ infected nodes and $0$ honeybots present in the network initially.

The population dynamics of the bots, described by (\ref{eq:bot_dynamics}) and (\ref{eq:honeypot_dynamics}), are shown in Fig. \ref{fig:simulation}(a).  Each curve represents the number of infected users over time for a different level of honeybot activity, as described by the parameter $z$.  In each case, the number of bots converges to its equilibrium value.  The top curve (solid line) assumes $z=0$, i.e. no deception takes place and malicious links are detected through blacklists only.  Employing deception through honeybots significantly reduces the botnet population, even when the number of honeybots is small relative to the population size.  As additional honeybots are added, the botnet population continues to decline.  However, the marginal benefit of adding a honeybot decreases as the number of honeybots grows large.  

The optimum number of honeybots depends on the cost of introducing and maintaining honeybots, denoted $\tau$, as well as the benefit $p$ from each honeybot, as described in (\ref{eq:z_opt}).  The optimum number of honeybots is given in Fig. \ref{fig:simulation}(b).  As the cost of introducing new honeybots is reduced, the optimal number of honeybots increases.  In each case, the optimum number of honeybots remains small, at around $25$ nodes, relative to the total network population of $10^{6}$ nodes.

The effect of a heterogeneous degree distribution is shown in Fig. \ref{fig:simulation}(c).  The degree distribution was chosen to be scale-free, so that the probability that a node has degree $d$ was proportional to $d^{-\gamma}$.  Hence a higher value of $\gamma$ corresponds to a less-connected network.  The parameter $\gamma$ had a significant impact on the rate of propagation of the botnet, even through for the chosen values of $\gamma$ the average degrees of the three networks were similar.

\section{Conclusion}
\label{sec:conclusion}
In this paper, we studied the problem of defending against social botnet attacks through deception.  We considered a defense mechanism in which fake honeybot accounts are deployed and infiltrate the botnet, impersonating infected users.  The infiltrating honeybots gather information from command and control messages, which are used to form malware signatures or add spam links to URL blacklists.  We
introduced a framework for SOcial network Deception and EXploitation through hOneybots (SODEXO), which provides an analytical approach to modeling and designing social honeybot defenses.  We decomposed SODEXO into deployment and exploitation components.

In the deployment component, we model the population dynamics of the infected users and honeybots, and show how the infected population is affected by the number of honeybots introduced.  We derive the steady-state populations of infected users and honeybots and prove the stability of the equilibrium point.  In the exploitation component, we formulate a Stackelberg game between the botmaster and the honeybots and determine the amount of information gathered by the honeybot in equilibrium.  The two components are combined in the Protection and Alert System (PAS), which chooses an optimal deployment strategy based on the observed behavior of the botnet and the information gathered by the honeybots.  Our results are supported by simulation studies, which show that a small number of honeybots  significantly decrease the infected population of a large social network.

\bibliographystyle{ieeetr}

\bibliography{Zhu_SODEXO}

\begin{thebibliography}{10}

\bibitem{baltazar2009real}
J.~Baltazar, J.~Costoya, and R.~Flores, ``The real face of koobface: The
  largest web 2.0 botnet explained,'' {\em Trend Micro Research}, vol.~5,
  no.~9, p.~10, 2009.

\bibitem{Keizer1998}
G.~Keizer, ``Worm spreads on facebook, hijacks users' clicks,'' {\em
  Computerworld}, December 2008.

\bibitem{twitter_faq}
``Twitter {FAQ} for developers.'' \url{https://dev.twitter.com/docs/faq}.

\bibitem{Thomas2010}
K.~Thomas and D.~Nicol, ``The {K}oobface botnet and the rise of social
  malware,'' in {\em 5th International Conference on Malicious and Unwanted
  Software (MALWARE)}, 2010.

\bibitem{grier2010spam}
C.~Grier, K.~Thomas, V.~Paxson, and M.~Zhang, ``@spam: the underground on 140
  characters or less,'' in {\em Proc. of the 17th ACM conference on Computer
  and communications security}, pp.~27--37, 2010.

\bibitem{Lee2010}
K.~Lee, J.~Caverlee, and S.~Webb, ``Uncovering social spammers: social
  honeypots + machine learning,'' in {\em Proc. of the 33rd international ACM
  SIGIR conference on Research and development in information retrieval},
  pp.~435--442, 2010.

\bibitem{Caddell2004}
J.~W. Caddell, {\em Deception 101 - Primer on Deception}.
\newblock Strategic Studies Institute, 2004.

\bibitem{DoD2004}
``{The Department of Defense Dictionary of Military and Associated Terms},''

\bibitem{Yuill2006}
J.~Yuill, D.~Denning, and F.~Feer, ``Using deception to hide things from
  hackers: Processes, principles, and techniques,'' {\em Journal of Information
  Warfare}, vol.~5, no.~3, pp.~26--40, 2006.

\bibitem{john2009studying}
J.~John, A.~Moshchuk, S.~Gribble, and A.~Krishnamurthy, ``Studying spamming
  botnets using botlab,'' in {\em Proc. of the 6th USENIX symposium on
  Networked systems design and implementation}, pp.~291--306, 2009.

\bibitem{Portokalidis2006}
G.~Portokalidis, A.~Slowinska, and H.~Bos, ``Argos: an emulator for
  fingerprinting zero-day attacks for advertised honeypots with automatic
  signature generation,'' {\em SIGOPS Oper. Syst. Rev.}, vol.~40, pp.~15--27,
  Apr. 2006.

\bibitem{McQueen2009}
M.~McQueen and W.~Boyer, ``Deception used for cyber defense of control
  systems,'' in {\em Human System Interactions, 2009. HSI '09. 2nd Conference
  on}, pp.~624 --631, may 2009.

\bibitem{ACMSurvey}
M.~Manshaei, Q.~Zhu, T.~Alpcan, T.~Ba\c{s}ar, and J.-P. Hubaux, ``Game theory
  meets network security and privacy,'' {\em ACM Computing Survey}, vol.~45,
  September 2013 (to appear).

\bibitem{khouzani2010maximum}
M.~H. R.Khouzani, S.~Sarkar, and E.~Altman, ``Maximum damage malware attack in
  mobile wireless networks,'' in {\em Proc. of the 29th conference on
  Information communications (INFOCOM'10), San Diego, California, USA}, 2010.

\bibitem{ZhuCDC2011}
Q.~Zhu, C.~Fung, R.~Boutaba, and T.~Ba\c{s}ar, ``A game-theoretic approach to
  knowledge sharing in distributed collaborative intrusion detection networks:
  Fairness, incentives and security,'' in {\em Proc. of 50th IEEE Conference on
  Decision and Control and European Control Conference (CDC-ECC)},
  pp.~243--250, 2011.

\bibitem{Zou2010}
P.~Wang, L.~Wu, R.~Cunningham, and C.~C. Zou, ``Honeypot detection in advanced
  botnet attacks,'' {\em International Journal of Information and Computer
  Security}, vol.~4, no.~1, pp.~30--51, 2010.

\bibitem{Zou2006}
C.~C. Zou and R.~Cunningham, ``Honeypot-aware advanced botnet construction and
  maintenance,'' in {\em Proc. of the International Conference on Dependable
  Systems and Networks}, DSN '06, (Washington, DC, USA), pp.~199--208, IEEE
  Computer Society, 2006.

\bibitem{Gu_NDSS08_botsniffer}
G.~Gu, J.~Zhang, and W.~Lee, ``{BotSniffer}: Detecting botnet command and
  control channels in network traffic,'' in {\em Proc. of the 15th Annual
  Network and Distributed System Security Symposium (NDSS'08)}, February 2008.

\bibitem{Gu_Scurity08_BotMiner}
G.~Gu, R.~Perdisci, J.~Zhang, and W.~Lee, ``{BotMiner}: Clustering analysis of
  network traffic for protocol- and structure-independent botnet detection,''
  in {\em Proc. of the 17th USENIX Security Symposium (Security'08)}, 2008.

\bibitem{ZhuCDC2009}
Q.~Zhu and T.~Ba\c{s}ar, ``Dynamic policy-based ids configuration,'' in {\em
  Proc. of the 48th IEEE Conference on Decision and Control (CDC)}, pp.~8600
  --8605, 2009.

\bibitem{GameSecZhu2011}
Q.~Zhu and T.~Ba\c{s}ar, ``Indices of power in optimal ids default
  configuration: theory and examples,'' in {\em Proc. of the Second
  international conference on Decision and Game Theory for Security},
  GameSec'11, (Berlin, Heidelberg), pp.~7--21, Springer-Verlag, 2011.

\bibitem{Zou2010a}
W.~Ping, B.~Aslam, and C.~C. Zou, ``Peer-to-peer botnets: The next generation
  of botnet attacks,'' in {\em Stavroulakis, P. and Stamp, M. (Eds), Handbook
  of Information and Communication Security}, (Washington, DC, USA),
  pp.~335--350, Springer Press, 2010.

\bibitem{Basar99dynamicnoncooperative}
T.~Ba\c{s}ar and G.~J. Olsder, {\em Dynamic Noncooperative Game Theory}.
\newblock SIAM Series in Classics in Applied Mathematics, 1999.

\bibitem{Provos:2007}
N.~Provos and T.~Holz, {\em Virtual honeypots: from botnet tracking to
  intrusion detection}.
\newblock Addison-Wesley Professional, first~ed., 2007.

\bibitem{khalil1992nonlinear}
H.~Khalil, {\em Nonlinear {S}ystems}.
\newblock Macmillan Publishing Company New York, 1992.

\bibitem{kanich2008spamalytics}
C.~Kanich, C.~Kreibich, K.~Levchenko, B.~Enright, G.~Voelker, V.~Paxson, and
  S.~Savage, ``Spamalytics: An empirical analysis of spam marketing
  conversion,'' in {\em Proc. of the 15th ACM conference on Computer and
  communications security}, pp.~3--14, ACM, 2008.

\bibitem{gjoka2011crawling}
M.~Gjoka, M.~Kurant, C.~Butts, and A.~Markopoulou, ``Practical recommendations
  on crawling online social networks,'' {\em IEEE Journal on Selected Areas in
  Communications (JSAC)}, vol.~29, pp.~1872--1892, October 2011.

\end{thebibliography}

\end{document}